%% LyX 2.0.5 created this file.  For more info, see http://www.lyx.org/.
%% Do not edit unless you really know what you are doing.
\documentclass[10.5pt,onecolumn,a4paper]{article}
\usepackage[latin9]{inputenc}
%\synctex=-1
\usepackage{amssymb}
\usepackage{float}
\usepackage{amsthm}
\usepackage{amsmath}
\usepackage{amssymb}
\usepackage{graphicx}
\usepackage{caption}
\usepackage{subcaption}
\makeatletter
%%%%%%%%%%%%%%%%%%%%%%%%%%%%%% Textclass specific LaTeX commands.
  \theoremstyle{remark}
  \newtheorem{rem}{\protect\remarkname}
  \theoremstyle{plain}
  \newtheorem{prop}{\protect\propositionname}
  \theoremstyle{plain}
  \newtheorem{lem}{\protect\lemmaname}

%%%%%%%%%%%%%%%%%%%%%%%%%%%%%% User specified LaTeX commands.

\usepackage{amsfonts}
\usepackage{amsmath,amsfonts,amssymb,amsthm}
\usepackage{color}
\usepackage{epsfig}

\everymath{\displaystyle}

\definecolor{bluecolor}{rgb}{0,0,1}

\usepackage{bbm}

\makeatother

  \providecommand{\lemmaname}{Lemma}
  \providecommand{\propositionname}{Proposition}
  \providecommand{\remarkname}{Remark}
  
  % Mise en page
\textheight 25cm
 \textwidth 15cm
 \topmargin -1cm
 \oddsidemargin 0cm
 \evensidemargin -0.25cm
 \hoffset 0.5cm
 \marginparsep 0cm
 \headsep
 \baselineskip
 \marginparwidth 0cm
 \headheight 0cm
 \columnsep 0cm

\begin{document}

\title{On control of discrete-time state-dependent jump linear systems with
probabilistic constraints: A receding horizon approach\footnote{\textit{preprint submitted to Systems $\&$ Control Letters}}}

%\author{\footnotesize \text{Shaikshavali Chitraganti, Samir Aberkane, Christophe Aubrun,}\\
%\footnotesize\text{Guillermo Valencia-Palomo, Vasile Dragan}}
\author{Shaikshavali Chitraganti\thanks{CRAN--CNRS UMR 7039, Universit\'{e} de Lorraine, 54500 Vandoeuvre-l\'{e}s-Nancy, France.} \and Samir Aberkane\footnotemark[2] \and Christophe Aubrun\footnotemark[2] \and Guillermo Valencia-Palomo\thanks{Instituto Tecnologico de Hermosillo, Av. Tecnologico y Periferico Poniente s/n, 83170 Mexico.} \and Vasile Dragan \thanks{Instistute of Mathematics ``Simon Stoilow", Romanian Academy, RO-014700 Bucharest, Romania.}}
\date{}
\maketitle
\begin{abstract}
In this article, we consider a receding horizon control of discrete-time state-dependent jump linear systems, particular 
kind of stochastic switching systems, subject to possibly unbounded random disturbances and probabilistic state constraints. Due to a nature of the dynamical system and the constraints, we consider a one-step receding horizon. Using inverse cumulative distribution function, we convert the probabilistic state constraints to deterministic constraints, and obtain a tractable deterministic receding horizon control problem. We consider the receding control law to have a linear state-feedback and an admissible offset term. We ensure mean square boundedness of the state variable via solving linear matrix inequalities off-line, and solve the receding horizon control problem on-line with control offset terms. We illustrate the overall approach applied on a macroeconomic system.
\end{abstract}

\section{Introduction}
Dynamical systems subject to random abrupt changes, such as manufacturing
systems, networked control systems, economics and finance \textit{etc}.,
can be modelled adequately by random jump linear systems (RJLSs). RJLSs
are a particular kind of stochastic switching systems, which consists
of a set of linear systems, also called multiple modes, and the switching
among them is governed by a random jump process.

A notable class of RJLSs is Markov jump linear system (MJLS) in which
the underlying random jump process is a finite state Markov chain (or a
finite state Markov process). Many important results related to stability,
control, and applications of such systems have been investigated in
the literature, for instance in \cite{boukasbook2006}, \cite{book_do2005discrete},
\cite{fang}, \cite{feng}, \cite{huang2013analysis}, \cite{li2013stochastic} etc. Almost all the works
related to MJLSs assume that the underlying random jump process is time-homogeneous/time-inhomogeneous Markov,
which is a restrictive assumption. 

In this article, we deal with a class of RJLSs in which the evolutions
of the random jump process depends on the state variable, and are
referred to as \textit{state-dependent jump linear systems} (SDJLS).
In the following, we list some motivations for SDJLS modelling of dynamical
systems. In the analysis of random breakdown of components, the age,
wear, and accumulated stress of a component affect its failure rate,
for instance. Thus, it can be assumed that the failure rate of a component
is dependent on state of the component at age $t$ \cite{bergman},
where the state variable may be an amount of wear, stress etc. As
an another instance, in \cite{motivation1}, a state-dependent Markov
process was utilized to describe the random break down of cylinder
lines in a heavy-duty marine diesel engine. Also, we can examine a
stock market with situations: up and down, and the transitions between
the situations can be dependent on state of the market, where the
state variable may be general mood of investors and current economy
etc. Also, a state-dependent regime switching model was considered
in \cite{motivation2} to model financial time series. One can find other instances or examples 
of SDJLS modelling in the literature. 

The studies of stability and control of SDJLSs have been scanty in the literature. A study of hybrid switching
diffusion processes, a kind of continuous-time state-dependent jump non-linear systems with diffusion,
was considered in \cite{yin2009hybrid} by treating existence, uniqueness, stability of the solutions etc.
For RJLSs, a state and control dependent random jump process was considered
in \cite{sworder1973}, where the authors used stochastic maximum
principle to obtain optimal control for a given objective function.
SDJLS modelling of flexible manufacturing system was proposed in \cite{boukas1990},
and dynamic programming is used to obtain an optimal input which minimizes
the mentioned cost. A state-dependent jump diffusion modelling of a
production plant was considered in \cite{filar2001} to obtain an optimal
control. In the sequel, we bring back the attention to the main ingredients
of the problem that we address in this article.

In this article, we consider that the SDJLS is affected by possibly unbounded stochastic
disturbances, and the perfect state information is assumed. We also deal with constraints on different variables of the system  that is inherent in all practical systems. Model predictive control (MPC), also called receding horizon control (RHC),  is an effective control algorithm that has a great potential to handle input and/or state constraints, for problems across different
disciplines. MPC is a form of control in which the current input is
obtained at each sampling time by solving on-line a finite horizon
optimal control problem in the presence of constraints, using the
current information and the predicted information over the finite
horizon. Normally more than one input is obtained at the current sampling
time, however, only the first controller input will be implemented
to the plant. At the next sampling time, these actions will be repeated,
that is why the MPC is also called the RHC. One can refer to \cite{LS_MPC_camacho2004model},
\cite{book_maciejowski2002predictive}, \cite{LS_MPC_mayne2000constrained},
\cite{LS_MPC_rawlings1993stability} etc., for classic contributions
in the MPC literature. In the context of RJLSs, of late, the RHC scheme has been extended to discrete-time MJLSs. For discrete-time MJLSs, the RHC with hard symmetric constraints and bounded uncertainties in system parameters was dealt
by  \cite{MPC_MJLS_lu2013}, \cite{MPC_MJLS_wen2013}, where the constraints and the objective function were posed as sufficient conditions in terms of linear matrix inequalities (LMIs) to be solved at each sampling time; for the similar case without constraints,  the RHC was addressed by \cite{LSC_park2002robust} following the similar approach. Also, for unconstrained state and control
input, optimality of the RHC was addressed via dynamic programming \cite{vargas2005optimality}, variational
methods \cite{vargas2004receding}, solving Riccati equations \cite{do1999receding},
etc.

One major issue in the presence of unbounded
disturbances is that the RHC cannot necessarily guarantee the satisfaction
of constraints. For instance, an additive unbounded disturbance eventually drives the state
variable outside any bounded limits, no matter how arbitrarily large they may be.  A possible alternative is to consider the satisfaction of constraints in stochastic manner, which allow occasional
constraint violations. In this direction, recent approaches \cite{LS_MPC_primbs2009stochastic},
\cite{Cannon2009}, \cite{LS_SOFTMPC_korda2011strongly}, and the references therein treat the RHC of discrete-time linear systems with stochastic constraints. For discrete-time MJLSs, in case of perfect state availability, a linear quadratic regulator problem with second moment constraints was considered in \cite{vargas2013second}, where the entire problem was converted to a set of LMIs. However, to the best of the authors' knowledge, the RHC of discrete-time SDJLSs with probabilistic constraints has not been examined.

In this article, we address a one-step RHC of discrete-time SDJLSs with additive Gaussian process noise (its distribution has an unbounded support), and probabilistic state constraints under perfect state availability. We would like to highlight several challenges in our problem set-up. First, in the presence of additive process noise with unbounded support, it impossible to guarantee hard bounds on the state, and also on the linear state-feedback control. Second, one needs to pre-stabilize the system before addressing the RHC problem. Third,
one needs to obtain a tractable representation of the RHC problem in the presence of probabilistic constraints.

Our approach along with main contributions in this article can be listed as follows. In our
problem set-up, we consider the control to have a linear state-feedback and an offset term
\cite{bemporad1999robust}, \cite{mayne2005robust}, where linear state-feedback gains are computed off-line for pre-stabilization and admissible offset terms are computed on-line to solve the RHC problem. In the presence of unbounded
process noise, it is not possible to ensure hard bounds on the state and the control variables that
follow state-feedback law, thus we consider the state variable to be probabilistically constrained
and the control to be unconstrained, except for the offset term. Using inverse cumulative distribution function, we convert the probabilistic state constraints to deterministic constraints and the overall RHC problem is replaced by a tractable deterministic RHC problem.

To summarize, for SDJLS subject to possibly unbounded random disturbances and probabilistic state constraints,
our contributions in this article are: 
\begin{itemize}
\item pre-stabilizing the system state by a state-feedback controller in means square sense,

\item implementing a one-step RHC scheme on-line with probabilistic constraints
on the state variable, which are converted to deterministic constraints.

\end{itemize}
For illustration, we apply our approach to a macroeconomic situation. 

The article is organized as follows. Section \ref{sec:Mathematical-Model}
presents the problem setup. We present the pre-stabilization of the system by a state
feed-back controller in section \ref{sec:Pre-stabilization}.
We convert the probabilistic constraints to suitable deterministic
constraints in section \ref{sec:prob_constraints}. We give a one-step RHC scheme
with probabilistic constraints in section \ref{sec:One-step-RHC-with}.
Section \ref{sec:Examples} presents an illustrative example followed by conclusions in section \ref{sec:conclusions}. Finally, we
give majority of the proofs in the Appendix to improve readability.

\textit{Notation: }Let $\mathbb{R}^{n}$ denotes the $n$-dimensional
real Euclidean space and  $\mathbb{N}_{\ge 0}$  the set of non-negative integers.  For a matrix $A$, $A^{T}$ denotes the transpose, $\lambda_{\text{min}}(A)$ $\left(\lambda_{\text{max}}(A)\right)$
the minimum (maximum) eigenvalue and $\mathrm{tr}(A)$
the trace of $A$. The standard vector norm in
$\mathbb{R}^{n}$ is denoted by $\Vert.\Vert$ the corresponding
induced norm of a matrix $A$ by $\Vert A\Vert$. Given a matrix $L$,
$L\succ0$ (or $L\prec0$) denotes that the matrix $L$ is positive
definite (or negative definite). Given two matrices $L$ and $M$,
$L\ge M$ (or $L\le M$) denotes the element wise inequalities. Symmetric
terms in block matrices are denoted by $\ast.$ A matrix product $AP\star$
denotes $APA^{T}.$ The identity matrix of dimension $n\times n$ is denoted by $\mathbb{I}_{n}$. The diagonal matrix formed from its vector arguments is denoted by diag$\{.\}$. The underlying probability space is denoted by  $(\Omega,\mathcal{F},\mathrm{Pr})$ where $\Omega$ is the space of elementary events, $\mathcal{F}$ is a $\sigma$-algebra, and $\mathrm{Pr}$ is the probability measure. The mathematical expectation of a random variable $X$ is denoted by $\mathbb{E}[X]$.  Further notation will be introduced when required.

\section{Problem setup\label{sec:Mathematical-Model}}

Consider a discrete-time SDJLS:
\begin{equation}
x_{k+1}=A_{\theta_{k}}x_{k}+B_{\theta_{k}}u_{k}+E_{\theta_{k}}w_{k},\quad k\in \mathbb{N}_{\ge 0}, \label{sys:djls_noise}
\end{equation}
with the state $x_{k}\in\mathbb{R}^{n}$, the control input $u_{k}\in\mathbb{R}^{m}$, the system mode $\theta_k$.
Let $A_{\theta_{k}},$ $B_{\theta_{k}}$ be the system matrices of appropriate
dimensions that are assumed to be known, and without loss of generality we assume that $E_{\theta_{k}}$ are
identity matrices. Perfect state information and the mode availability is assumed at each time $k\in \mathbb{N}_{\ge 0}$. We set $\mathcal{F}_{k}$ the $\sigma$-algebra generated by $\{(\theta_t,x_t); 0\leq t\leq k\}$. Regarding the different stochastic processes,
the following assumptions are made: \\

\noindent \textbf{Assumption 1.}  Let $w_{k}\in \mathbb{R}^n$ be an $n-$dimensional standard Gaussian distributed random vector i.e.  $w_{k}\sim\mathcal{N}(0,\mathbb{I}_{n})$, independent of $\mathcal{F}_k$.\\

\noindent \textbf{Assumption 2.} Let the mode $\{\theta_{k}\}_{k\in \mathbb{N}_{\ge 0}}\in S=\{1,2,\cdots,N\}$ be a finite-state
random jump process described by
\begin{align}
\mathrm{Pr}\{\theta_{k+1}=j|\mathcal{F}_{k}\}=\mathrm{Pr}\{\theta_{k+1}=j|(\theta_{k},x_{k})\}=\pi_{\theta_{k}j}(x_{k})=\begin{cases}
\lambda_{\theta_{k}j}, & \mathrm{if}\, x_{k}\in\mathcal{C}_{1},\\
\mu_{\theta_{k}j}, & \mathrm{if}\, x_{k}\in\mathcal{C}_{2},\,
\end{cases}\label{eq:p_tran}
\end{align}
where $\lambda_{ij}\ge0\,(\mu_{ij}\ge0)$ is the transition probability
from a state $i$ to a state $j$, and ${\textstyle \sum\nolimits _{j=1}^{N}}\lambda_{ij}=1\left({\textstyle \sum\nolimits _{j=1}^{N}}\mu_{ij}=1\right)$.
Here we further assume that $\mathcal{C}_{1}\cup\mathcal{C}_{2}=\mathbb{R}^{n}$,
$\mathcal{C}_{1}\cap\mathcal{C}_{2}=\phi$.\\

If we observe the SDJLS (\ref{sys:djls_noise}), the actual state consists of two parts: the state $x_k$ and the mode $\theta_k$. If the mode $\theta_k$ is fixed for all $k\in \mathbb{N}_{\ge 0}$, then the SDJLS (\ref{sys:djls_noise}) is equivalent to the standard linear time-invariant system. Thus the role of the mode $\theta_k$ is important in the SDJLSs, and in general in any switching systems. With this background, we observe the mode dependence through out the article: in designing state-feedback gains, in handling probabilistic constraints and in weighting matrices in the objective function.

The sets  $\mathcal{C}_{1}$ and $\mathcal{C}_{2}$ in (\ref{eq:p_tran}) are described explicitly as follows. We consider the set $\mathcal{C}_{1}$ as a polyhedral intersection
of finite constraints on the state variable and $\mathcal{C}_{2}$
as $\mathbb{R}^{n}-\{\mathcal{C}_{1}\}$. So, in this article, the
states of the system that belong to $\mathcal{C}_{1}$ are specifically
denoted by
\begin{align}
x_{k}\in\mathcal{C}_{1}\Longleftrightarrow Gx_{k}\le H,\label{eq:polyhedral}
\end{align}
where $G\in\mathbb{R}^{r\times n}$ and $H\in\mathbb{R}^{r}$. We can regard $\mathcal{C}_1$ as a desirable set, and ideally we want the system state to be constrained in $\mathcal{C}_1$. However, the state variable being affected by Gaussian noise that is unbounded in distribution, it is impossible to constrain the state to the set $\mathcal{C}_1$ at all times. Thus, we consider probabilistic state constraints, and in section \ref{sec:prob_constraints}, we describe in detail the type of probabilistic state constraints that we deal in this article.
 
In our RHC problem, we address the minimization of the following one-step objective function
\begin{align}
&J_{k}=\mathbb{E} \Big[x_{k}^{T}Q_{\theta_{k}}x_{k}+u_{k}^{T}R_{\theta_{k}}u_{k}+x_{k+1}^{T}\Psi_{\theta_{k+1}}x_{k+1}|\mathcal{F}_{k}\Big] \label{obj:func}
\end{align}
subject to probabilistic state constraints and the control input that follows an affine state-feedback law. Here $Q_{i}\succ 0$, $R_{i}\succ 0,$ $\Psi_{i}\succ 0$ are assumed to be known for $i\in S.$ In this formulation, the terminal
cost depend on the matrices $\Psi_{(.)}$ and the state of the system at time $k+1$. The choice of one-step prediction in (\ref{obj:func}) is explained in detail in section \ref{sec:One-step-RHC-with} in perspective of the system and the constraints.
\subsection{Control policies}
We parametrize the control input as (inspired by \cite{bemporad1999robust}, \cite{mayne2005robust}) 
\begin{equation}
u_{k}=K_{\theta_{k}}x_{k}+\nu_{k},\label{eq:stabilization_input}
\end{equation}
where feed-back gains $K_{\theta_{k},}$ for $\theta_{k}\in S$, will
be designed to pre-stabilize the system in mean square sense that will be given the next section. An on-line minimization of a $J_k$ in (\ref{obj:func}) subject to constraints will be
carried out by an appropriate $\nu_{k}$.\\
% Preview source code for paragraph 27

\begin{rem}
Notice that the state-feedback control input (\ref{eq:stabilization_input})
is not bounded a priori. The reason is as follows. From (\ref{sys:djls_noise}), the state $x_k$ in (\ref{eq:stabilization_input})
is a function of Gaussian noise. Since Gaussian distribution
has an unbounded support, we cannot assume the a priori boundedness
of $u_{k}$ (\ref{eq:stabilization_input}). Thus, we just consider $\nu_k$ to be bounded to obtain a tractable solution of minimization of $J_k$ in (\ref{obj:func}) subject to constraints.
\end{rem}
Thus, we give an explicit assumption on $\nu_k$ in the following.\\

\noindent \textbf{Assumption 3.} \label{assume:input_compact} We assume that
$\nu_{k}$ belong to a compact set $\mathbb{U}\subset\mathbb{R}^{m}$.\\ 
\begin{rem}
\label{remark:input_measurable}Note that $\nu_{k}$ is obtained by solving $\mathcal{P}_3$ on-line that will be introduced in section \ref{sec:One-step-RHC-with}. It will be observed that the computation of $\nu_{k}$ relies on the available information $(\theta_k,x_k)$, for each $k\in \mathbb{N}_{\ge 0}$. Thus $\nu_k$ can be regarded as $\mathcal{F}_{k}$-measurable random variable taking values in $\mathbb{U}$.\\
\end{rem}
Substituting (\ref{eq:stabilization_input}) into (\ref{sys:djls_noise}),
we obtain the closed loop system as
\begin{equation}
x_{k+1}=\tilde{A}_{\theta_{k}}x_{k}+B_{\theta_{k}}\nu_{k}+w_{k},\label{sys:stabilized_input}
\end{equation}
 with $\tilde{A}_{\theta_{k}}=A_{\theta_{k}}+B_{\theta_{k}}K_{\theta_{k}}.$
 
\begin{rem}
In this article, we consider the linear state-feedback control input (\ref{eq:stabilization_input}) for pre-stabilization, where we could not ensure hard bounds on the control because of the stochastic unbounded disturbance.  To incorporate bounded controls in the class of causal feedback policies, one alternative could be employing an affine disturbance feedback control with non-linear saturation of noise terms \cite{hokayem2010stable}, \cite{hokayem2012stochastic}. In this approach, the disturbance implicitly depends on the output, hence on the state in a causal manner. However, in this case the system under consideration needs to be internally stable.
\end{rem}

In the next section, we synthesize the state-feedback gains $K_{\theta_{k}}$ in (\ref{eq:stabilization_input}), which pre-stabilize
(\ref{sys:stabilized_input}) in bounding the state-trajectories in mean square sense.

\section{Pre-stabilization results\label{sec:Pre-stabilization}}

In this section, we pre-stabilize  the system (\ref{sys:stabilized_input}) with state-feedback gains $K_{\theta_{k}}$ by solving sufficient LMIs off-line.

In general, in the presence of
unbounded disturbances, it is difficult to guarantee or establish
stability. It is impossible to expect asymptotic stability to origin.
However, we synthesize a state-feedback controller
that bound the second moment of the state-trajectories. The results of this section rely on a stochastic version of Lyapunov's second method \cite{book_do2005discrete}, \cite{book_dragan2010mathematical} by the choice of mode-dependent Lyapunov function $V(x_k,\theta_k)=x_k^T P_{\theta_k}x_k$ with $P_{\theta_k}\succ 0, \forall \theta_k \in S$. In \cite{book_dragan2010mathematical}, the authors studied the stochastic stability of discrete-time MJLSs subject to uncertainties in system parameters, and we extend these results to examine the stochastic stability of SDJLSs (\ref{sys:stabilized_input}),  where the underlying random jump process is state-dependent, which is the theme of our approach. We begin with the following definitions.\\

\noindent Let $\mathcal{T}\subset\mathbb{R}^{m\times n}$ be a linear
subspace of real matrices. Set $\mathcal{T}^{N}=\mathcal{T}\times\mathcal{T}\times\cdots\times\mathcal{T}$.
Also, let $\mathcal{S}_{n}\subset\mathbb{R}^{n\times n}$ be a linear
subspace of real symmetric matrices. Set $\mathcal{S}_{n}^{N}=\mathcal{S}_{n}\times\mathcal{S}_{n}\times\cdots\times\mathcal{S}_{n}$.\\ 

%\noindent \textbf{Definition 2.} System (\ref{sys:stabilized_input})
%with $\nu_{k}\equiv0$ and $w_{k}\equiv0$, $\forall k\in \mathbb{N}_{\ge 0}$, is
%said to be mean square quadratic stabilizable (MSQS) if there exist
%matrices $K=(\begin{array}{ccc}
%K_{1} & \cdots & K_{N}\end{array})\in\mathcal{T}^{N}$, $P=\left(\begin{array}{ccc}
%P_{1} & \cdots & P_{N}\end{array}\right)\in\mathcal{S}_{n}^{N}$, with $P_{i}\succ0$, $1\leq i\leq N$ such that 
%\begin{equation}
%\tilde{A}_{i}^{T}\left(\sum\nolimits _{j=1}^{N}\lambda_{ij}P_{j}\right)\tilde{A}_{i}-P_{i}=\mathcal{E}_{i}^{1}\prec0,\label{LMI1}
%\end{equation}
%and 
%\begin{equation}
%\tilde{A}_{i}^{T}\left(\sum\nolimits _{j=1}^{N}\mu_{ij}P_{j}\right)\tilde{A}_{i}-P_{i}=\mathcal{E}_{i}^{2}\prec0.\label{LMI2}
%\end{equation}

We give a proposition to pre-stabilize the system (\ref{sys:stabilized_input}) in the following.

\begin{prop} \label{theorem:msqs}
\noindent If there exist matrices $K=(\begin{array}{ccc}
K_{1} & \cdots & K_{N}\end{array})\in\mathcal{T}^{N}$, $P=\left(\begin{array}{ccc}
P_{1} & \cdots & P_{N}\end{array}\right)\in\mathcal{S}_{n}^{N}$, with $P_{i}\succ0$, $1\leq i\leq N$ such that 
\begin{equation}
\tilde{A}_{i}^{T}\left(\sum\nolimits _{j=1}^{N}\lambda_{ij}P_{j}\right)\tilde{A}_{i}-P_{i}=\mathcal{E}_{i}^{1}\prec0,\label{LMI1}
\end{equation}
and 
\begin{equation}
\tilde{A}_{i}^{T}\left(\sum\nolimits _{j=1}^{N}\mu_{ij}P_{j}\right)\tilde{A}_{i}-P_{i}=\mathcal{E}_{i}^{2}\prec0.\label{LMI2}
\end{equation}
then the trajectories of (\ref{sys:stabilized_input}) satisfy
the estimates:
\end{prop}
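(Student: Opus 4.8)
The plan is to use the mode-dependent stochastic Lyapunov function $V(x_k,\theta_k)=x_k^TP_{\theta_k}x_k$ with $P_{\theta_k}\succ0$, and to establish a one-step drift inequality
\[
\mathbb{E}\big[V(x_{k+1},\theta_{k+1})\mid\mathcal{F}_k\big]\le\rho\,V(x_k,\theta_k)+C
\]
for some $\rho\in(0,1)$ and constant $C\ge0$; iterating this inequality through the tower property then gives a geometrically decaying transient bound plus a bounded residual for $\mathbb{E}[V(x_k,\theta_k)]$, which I would convert into the claimed mean-square estimate on $\mathbb{E}[\Vert x_k\Vert^2]$ by sandwiching $V$ between $\lambda_{\text{min}}(P_{\theta_k})\Vert x_k\Vert^2$ and $\lambda_{\text{max}}(P_{\theta_k})\Vert x_k\Vert^2$.

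The key computation is $\mathbb{E}[V(x_{k+1},\theta_{k+1})\mid\mathcal{F}_k]$ along the closed loop (\ref{sys:stabilized_input}). Conditioning on $\mathcal{F}_k$ fixes $x_k$, $\theta_k=i$ and $\nu_k$ (the latter being $\mathcal{F}_k$-measurable by Remark \ref{remark:input_measurable}), whereas $w_k$ is independent of $\mathcal{F}_k$ (Assumption 1) and of the jump mechanism. Averaging first over $\theta_{k+1}$ via (\ref{eq:p_tran}) replaces $P_{\theta_{k+1}}$ by $\bar P_i:=\sum_{j=1}^N\pi_{ij}(x_k)P_j$, which equals $\sum_{j=1}^N\lambda_{ij}P_j$ when $x_k\in\mathcal{C}_1$ and $\sum_{j=1}^N\mu_{ij}P_j$ when $x_k\in\mathcal{C}_2$. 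Writing $x_{k+1}=\tilde A_ix_k+B_i\nu_k+w_k$ and using $\mathbb{E}[w_k]=0$, $\mathbb{E}[w_kw_k^T]=\mathbb{I}_n$ (here $E_{\theta_k}=\mathbb{I}_n$ is used) gives
\[
\mathbb{E}\big[V(x_{k+1},\theta_{k+1})\mid\mathcal{F}_k\big]=(\tilde A_ix_k+B_i\nu_k)^T\bar P_i(\tilde A_ix_k+B_i\nu_k)+\mathrm{tr}(\bar P_i).
\]
Now I would invoke (\ref{LMI1})--(\ref{LMI2}): in either region $\tilde A_i^T\bar P_i\tilde A_i=P_i+\mathcal{E}_i^{\ell}$ with $\ell\in\{1,2\}$ and $\mathcal{E}_i^{\ell}\prec0$, so expanding the quadratic isolates $V(x_k,\theta_k)$ plus the indefinite remainder $x_k^T\mathcal{E}_i^{\ell}x_k$, a cross term linear in $x_k$, a term quadratic in $\nu_k$, and $\mathrm{tr}(\bar P_i)$.

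The bounding step uses $x_k^T\mathcal{E}_i^{\ell}x_k\le-\alpha\Vert x_k\Vert^2$ with $\alpha:=\min_{i\in S,\,\ell\in\{1,2\}}\lambda_{\text{min}}(-\mathcal{E}_i^{\ell})>0$, and Young's inequality $2x_k^T\tilde A_i^T\bar P_iB_i\nu_k\le\epsilon\Vert x_k\Vert^2+\epsilon^{-1}\Vert\tilde A_i^T\bar P_iB_i\nu_k\Vert^2$ for the cross term. Since $\nu_k$ ranges over the compact set $\mathbb{U}$ (Assumption 3) and the finitely many $P_j$, $\tilde A_i$, $B_i$ are fixed, the cross-term remainder, the $\nu_k$-quadratic, and $\mathrm{tr}(\bar P_i)$ are all uniformly bounded, yielding a single constant $C$. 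Choosing $\epsilon<\alpha$ and using $\Vert x_k\Vert^2\ge V(x_k,\theta_k)/\bar p$ with $\bar p:=\max_{i}\lambda_{\text{max}}(P_i)$ then produces the drift inequality with $\rho=1-(\alpha-\epsilon)/\bar p$. Taking unconditional expectations and iterating gives $\mathbb{E}[V(x_k,\theta_k)]\le\rho^k\mathbb{E}[V(x_0,\theta_0)]+C/(1-\rho)$, and dividing by $\underline p:=\min_i\lambda_{\text{min}}(P_i)>0$ delivers the desired mean-square bound and its $\limsup$ as $k\to\infty$.

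The main obstacle is the state-dependence of the transition law: because $\bar P_i$ switches between the $\lambda$- and $\mu$-weighted averages according to whether $x_k\in\mathcal{C}_1$ or $\mathcal{C}_2$, a single LMI does not suffice and I must carry both (\ref{LMI1}) and (\ref{LMI2}) and take the uniform constant $\alpha$ over both regions. A second delicate point is verifying $\rho\in(0,1)$: this needs $\alpha-\epsilon<\bar p$, which holds because $-\mathcal{E}_i^{\ell}=P_i-\tilde A_i^T\bar P_i\tilde A_i\preceq P_i$ forces $\lambda_{\text{min}}(-\mathcal{E}_i^{\ell})\le\lambda_{\text{max}}(P_i)\le\bar p$. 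Finally, because the feedback $u_k$ and hence $x_k$ are unbounded under the Gaussian noise, the finiteness of $C$ rests entirely on confining $\nu_k$ to the compact set $\mathbb{U}$ and on the $\mathcal{F}_k$-measurability of $\nu_k$, which is what makes the cross term integrable after Young's inequality.
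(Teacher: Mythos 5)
Your proposal is correct and follows essentially the same route as the paper's Appendix A proof: the same mode-dependent Lyapunov function $V(x_k,\theta_k)=x_k^TP_{\theta_k}x_k$, the same state-dependent averaging of $P_{\theta_{k+1}}$ over the $\lambda$- and $\mu$-weighted sums, the same use of (\ref{LMI1})--(\ref{LMI2}) to extract a $-\alpha\Vert x_k\Vert^2$ decrement, Young's inequality on the cross term, compactness of $\mathbb{U}$ to bound the $\nu_k$-dependent residual, and iteration of the resulting drift inequality. The only (welcome) additions are your exact trace computation for the noise term in place of the paper's eigenvalue bound, and your explicit verification that $q=\rho$ lies in $(0,1)$, a point the paper asserts without argument.
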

\begin{equation}\label{eq:msqs}
\mathbb{E}\left[\Vert x_{k}\Vert^{2}\right]\leq\beta_{1}\Vert x_{0}\Vert^{2}q^{k}+\beta_{2},\;\forall\; k\in \mathbb{N}_{\ge 0},
\end{equation}
\textit{for some $\beta_{1}\geq1$, $\beta_{2}>0$ and $q\in(0,1)$.}
\begin{proof}
Given in Appendix A.
\end{proof}
In the sequel, we provide a proposition to synthesize gains $K_{\theta_{k}},$
for $\theta_{k}\in S.$ 
\begin{prop}
\label{theorem:stabilization_SS} If there exist matrices $X_{i}\succ0$
and $Y_{i}$, $1\leq i\leq N$, such that 
\end{prop}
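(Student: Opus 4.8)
The plan is to read the (truncated) hypotheses as a pair of linear matrix inequalities in the new variables $X_i\succ 0$ and $Y_i$, and to show that the feedback gains recovered through $K_i=Y_iX_i^{-1}$ satisfy exactly the hypotheses (\ref{LMI1})--(\ref{LMI2}) of Proposition \ref{theorem:msqs}; the mean square estimate (\ref{eq:msqs}) then follows at once from that proposition. The whole difficulty is that (\ref{LMI1})--(\ref{LMI2}) are \emph{not} linear in the pair $(K,P)$: the unknown gain enters quadratically through $\tilde A_i=A_i+B_iK_i$, and the weighting $\sum_j\lambda_{ij}P_j$ couples all the Lyapunov certificates. The substitution $X_i=P_i^{-1}$, $Y_i=K_iX_i$ is designed precisely to remove the first nonlinearity, while a block expansion removes the second.

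First I would rewrite (\ref{LMI1}) as $P_i-\tilde A_i^T(\sum_j\lambda_{ij}P_j)\tilde A_i\succ 0$ and apply a Schur complement against the positive-definite block $\sum_j\lambda_{ij}P_j$ to obtain the equivalent two-by-two condition carrying $(\sum_j\lambda_{ij}P_j)^{-1}$ in the lower-right corner. Performing the congruence transformation by $\mathrm{diag}(X_i,\mathbb{I}_{n})$ with $X_i=P_i^{-1}$ turns the $(1,1)$ block into $X_i$ and the off-diagonal block into $\tilde A_iX_i=A_iX_i+B_iK_iX_i=A_iX_i+B_iY_i$, which is now \emph{affine} in $(X_i,Y_i)$. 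This is the step that linearizes the gain.

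The remaining obstacle---and the one I expect to be the crux---is the term $(\sum_j\lambda_{ij}X_j^{-1})^{-1}$ sitting in the lower-right corner, a nonconvex function of the $X_j$. The device is not to invert it at all but to split the single Schur block into $N$ blocks: I would verify that the hypothesized LMI has the bordered form with $X_i$ in position $(1,1)$, the scaled copies $\sqrt{\lambda_{ij}}(A_iX_i+B_iY_i)$ filling the first column, and the block-diagonal matrix $\mathrm{diag}(X_1,\dots,X_N)$ on the diagonal. Taking the Schur complement of this \emph{block-diagonal} term (each $X_j\succ 0$) collapses the first column back to $X_i-X_i\tilde A_i^T(\sum_j\lambda_{ij}X_j^{-1})\tilde A_iX_i\succ 0$, so the nonconvex inverse is reconstructed exactly by the Schur step rather than carried as a decision variable.

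To finish, I would pre- and post-multiply this last inequality by $P_i=X_i^{-1}$ (a congruence, hence sign-preserving), which yields $P_i-\tilde A_i^T(\sum_j\lambda_{ij}P_j)\tilde A_i\succ 0$, i.e. exactly (\ref{LMI1}) with $\mathcal E_i^1\prec 0$; the identical argument with $\mu_{ij}$ replacing $\lambda_{ij}$ gives (\ref{LMI2}). Since $X_i\succ 0$ guarantees $P_i=X_i^{-1}\succ 0$ and the invertibility needed to define $K_i=Y_iX_i^{-1}$, all hypotheses of Proposition \ref{theorem:msqs} hold for the closed loop (\ref{sys:stabilized_input}), and the bound (\ref{eq:msqs}) follows. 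The only genuinely delicate point is the block-expansion/Schur-complement equivalence of the previous paragraph; the rest is bookkeeping with congruence transformations.
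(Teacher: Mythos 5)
Your proposal is correct and follows essentially the same route as the paper: a Schur complement linking (\ref{LMI1})--(\ref{LMI2}) to the bordered LMIs, the substitutions $X_i=P_i^{-1}$, $Y_i=K_iX_i$, a congruence by $\mathrm{diag}\{X_1,\dots,X_N,X_i\}$, and then an appeal to Proposition \ref{theorem:msqs}. Your observation that the $\Lambda_i$/$\Gamma_i$ bordering with the block-diagonal $X_D$ is exactly what reconstructs $\sum_j\lambda_{ij}X_j^{-1}$ without carrying the nonconvex inverse is the same device the paper uses (your block ordering is the transpose of the paper's, which is immaterial).
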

\begin{equation}
\begin{bmatrix}-X_{D} & \Lambda_{i}\left(A_{i}X_{i}+B_{i}Y_{i}\right)\\
\ast & -X_{i}
\end{bmatrix}\prec0,\label{lmi:stabiization1}
\end{equation}
\textit{and} 
\begin{equation}
\begin{bmatrix}-X_{D} & \Gamma_{i}\left(A_{i}X_{i}+B_{i}Y_{i}\right)\\
\ast & -X_{i}
\end{bmatrix}\prec0,\label{lmi:stabiization2}
\end{equation}
\textit{where} 
\begin{equation*}
\begin{cases}
\Lambda_{i}  =\begin{bmatrix}\sqrt{\lambda_{i1}}\mathbb{I}_{n} & \cdots & \sqrt{\lambda_{iN}}\mathbb{I}_{n}\end{bmatrix}^{T},\\
\Gamma_{i}  =\begin{bmatrix}\sqrt{\mu_{i1}}\mathbb{I}_{n} & \cdots & \sqrt{\mu_{iN}}\mathbb{I}_{n}\end{bmatrix}^{T},\\
X_{D} =\text{diag}\{X_{1},\cdots,X_{N}\},
\end{cases}
\end{equation*}
\textit{then the trajectories of the system} (\ref{sys:stabilized_input})
\textit{satisfy} (\ref{eq:msqs}), \textit{and $K_{i}$ is given by}
\begin{equation}
K_{i}=Y_{i}X_{i}^{-1}.\label{controller:stateFB}
\end{equation}

\begin{proof}
Briefly, we explain the steps to arrive at the result. Consider the
Schur complement of (\ref{LMI1}), (\ref{LMI2}), and let $X_{i}\triangleq P_{i}^{-1}$.
By the congruent transformation of Schur complement of (\ref{LMI1}),
(\ref{LMI2}) by diag$\{X_{1},\cdots,X_{N},X_{i}\}$, and applying
a change of variable $Y_{i}\triangleq K_{i}X_{i}$ one obtains (\ref{lmi:stabiization1}),
(\ref{lmi:stabiization2}). Hence, if (\ref{lmi:stabiization1}),
(\ref{lmi:stabiization2}) are feasible and using proposition \ref{theorem:msqs}, it follows that the trajectories of (\ref{sys:stabilized_input}) satisfy the estimates (\ref{eq:msqs}). Hence the proof is complete.
\end{proof}
\begin{rem}
Though the conditions to obtain the state-feedback gains $K_{\theta_k}$, for $\theta_k\in S$, in proposition \ref{theorem:stabilization_SS} seems computationally heavy, we perform these computations off-line, and only once. Once we obtain the state-feedback gains $K_{\theta_k}$, we use them as a lookup table, where depending on the mode $\theta_k$ at each time $k\in\mathbb{N}_{\ge 0}$, the corresponding value of the state-feedback gain is used in updating the state in (\ref{sys:stabilized_input}).
\end{rem}
\section{Probabilistic constraints\label{sec:prob_constraints}}
In this section, we consider the probabilistic state constraints that we address in this article, and replace them by deterministic constraints.

We mentioned in section \ref{sec:Mathematical-Model}, that it is impossible to constrain the state variable to the set $\mathcal{C}_{1},$ which is polyhedron $Gx_{k}\le H$, where $G\in\mathbb{R}^{r\times n}$ and $H\in\mathbb{R}^{r}$. Rather,
let us consider probabilistic constraint

\begin{equation}
\text{Pr}\{Gx_{k}\le H\}\ge\xi,\label{constraint:multivariate}
\end{equation}
which portrays the probabilistic scenario of staying in the polyhedron.
Here $\xi$ represents the level of constraint satisfaction, which needs
to be high enough ($\xi>>0$). However, without loss of generality, we assume that $\xi$ may take any
values in the interval $[0,1]$.  In the sequel, we try to obtain deterministic
constraints that imply the probabilistic constraint (\ref{constraint:multivariate}).
We say it as converting probabilistic constraint to deterministic
ones. A sufficient condition to satisfy (\ref{constraint:multivariate})
is given by \cite{LS_SOFTMPC_korda2011strongly},

\begin{equation}
\text{Pr}\{Gx_{k+1}\le H|\mathcal{F}_k\}\ge\xi. \label{constraint:multivariate_final}
\end{equation}
Now, we present a sufficient deterministic condition to satisfy (\ref{constraint:multivariate_final})
via approximation of inscribed ellipsoidal constraints \cite{van2001lmi},
\cite{van2002conic}. 
\begin{lem}
\label{theorem:multivariate}Let $\delta=\sqrt{F_{n,Chi}^{-1}(\xi)}$,
where $F_{n,Chi}^{-1}(\xi)$ denotes Chi-square inverse cumulative
distribution function with a given probability $\xi$ and $n$ degrees
of freedom. If

\begin{equation}
G_{j}(\tilde{A}_{\theta_{k}}x_{k}+B_{\theta_{k}}\nu_{k})\le H_{j}-\Vert G_{j}\Vert_{2}\delta,\,\text{ for }\,1\le j\le r,\label{constraint:chi_square}
\end{equation}
 then $\,\mathrm{Pr}\{Gx_{k+1}\le H|\mathcal{F}_k\}\ge\xi$, where $G_j$ and $H_j$ are $j^{\text{th}}$ rows of $G$ and $H$ respectively.\end{lem}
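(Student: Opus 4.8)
The plan is to exploit the conditional Gaussian structure of $x_{k+1}$ and reduce the probabilistic statement (\ref{constraint:multivariate_final}) to a deterministic geometric containment of a Euclidean ball inside the polyhedron $\{x:Gx\le H\}$. First I would condition on $\mathcal{F}_k$. By Remark \ref{remark:input_measurable} the offset $\nu_k$ is $\mathcal{F}_k$-measurable, and the mode $\theta_k$ and state $x_k$ are known given $\mathcal{F}_k$; hence $\bar{x}:=\tilde{A}_{\theta_k}x_k+B_{\theta_k}\nu_k$ is $\mathcal{F}_k$-measurable. From the closed-loop dynamics (\ref{sys:stabilized_input}) and Assumption 1, conditionally on $\mathcal{F}_k$ the only remaining randomness is $w_k\sim\mathcal{N}(0,\mathbb{I}_n)$, so that the conditional law of $x_{k+1}$ is $\mathcal{N}(\bar{x},\mathbb{I}_n)$.

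Second, I would introduce the ball $\mathcal{B}=\{x\in\mathbb{R}^n:\Vert x-\bar{x}\Vert\le\delta\}$ centered at the conditional mean $\bar{x}$. Since $x_{k+1}-\bar{x}$ is, conditionally on $\mathcal{F}_k$, standard Gaussian, the quantity $\Vert x_{k+1}-\bar{x}\Vert^{2}$ is chi-square distributed with $n$ degrees of freedom. Therefore $\mathrm{Pr}\{x_{k+1}\in\mathcal{B}\,|\,\mathcal{F}_k\}=\mathrm{Pr}\{\Vert x_{k+1}-\bar{x}\Vert^{2}\le\delta^{2}\,|\,\mathcal{F}_k\}=F_{n,Chi}(\delta^{2})$, and with the choice $\delta=\sqrt{F_{n,Chi}^{-1}(\xi)}$ this equals exactly $\xi$. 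Thus $\mathcal{B}$ carries conditional probability mass precisely $\xi$.

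Third, I would show that the deterministic hypothesis (\ref{constraint:chi_square}) is exactly the statement $\mathcal{B}\subseteq\{x:Gx\le H\}$, i.e. that $\mathcal{B}$ is inscribed in the polyhedron. For each row index $j$, the ball lies in the half-space $\{x:G_jx\le H_j\}$ if and only if $\max_{\Vert y\Vert\le\delta}G_j(\bar{x}+y)\le H_j$. By the Cauchy--Schwarz inequality, $\max_{\Vert y\Vert\le\delta}G_j y=\delta\Vert G_j\Vert_2$, attained at $y=\delta G_j^{T}/\Vert G_j\Vert_2$, so this is equivalent to $G_j\bar{x}\le H_j-\Vert G_j\Vert_2\delta$, which is precisely (\ref{constraint:chi_square}). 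Intersecting over $1\le j\le r$ yields $\mathcal{B}\subseteq\{x:Gx\le H\}$.

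Finally, combining these by monotonicity of the conditional probability gives
\begin{equation*}
\mathrm{Pr}\{Gx_{k+1}\le H\,|\,\mathcal{F}_k\}\ge\mathrm{Pr}\{x_{k+1}\in\mathcal{B}\,|\,\mathcal{F}_k\}=\xi,
\end{equation*}
which is (\ref{constraint:multivariate_final}), and by the sufficiency already noted this implies (\ref{constraint:multivariate}). The step I expect to be the crux is the geometric inscription in the third paragraph: the argument is sound but conservative, since it replaces the full polyhedral feasible set by the largest centered ball it contains, so feasibility of (\ref{constraint:chi_square}) is sufficient but not necessary for (\ref{constraint:multivariate_final}). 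Care is needed to verify that the Cauchy--Schwarz bound is tight, so that no slack is lost in passing from the support-function inequality to (\ref{constraint:chi_square}), and to confirm that the covariance is exactly $\mathbb{I}_n$ (guaranteed here by the normalization $E_{\theta_k}=\mathbb{I}_n$), since a non-identity covariance would turn the relevant confidence region into an ellipsoid and alter the norm appearing in (\ref{constraint:chi_square}).
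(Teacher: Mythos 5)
Your proposal is correct and follows essentially the same route as the paper: both reduce the conditional chance constraint to the containment of a Euclidean ball of radius $\delta=\sqrt{F_{n,Chi}^{-1}(\xi)}$ (carrying Gaussian mass $\xi$ via the chi-square distribution) inside the polyhedron, and both certify that containment row-by-row through the support function $\max_{\Vert y\Vert\le\delta}G_{j}y=\delta\Vert G_{j}\Vert_{2}$. The only cosmetic difference is that you center the ball at the conditional mean in $x_{k+1}$-space while the paper translates to $w_{k}$-space and centers it at the origin.
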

\begin{proof}
The proof is given on the similar lines of \cite{van2001lmi}, \cite{van2002conic}
and outlined in Appendix B.\end{proof}

Note that (\ref{constraint:chi_square}) is an over conservative
condition, because the deterministic condition (\ref{constraint:chi_square})
implies (\ref{constraint:multivariate_final}), which finally implies
(\ref{constraint:multivariate}). Even for this special case, we could not obtain an equivalent representation of (\ref{constraint:multivariate_final}), which is non-convex in general, where it is hard to find its feasible region \cite{prekopa1995stochastic}. So, alternatively, we propose the individual probabilistic constraints of type
\begin{equation}
\text{Pr}\{G_{j}x_{k}\le H_{j}\}\ge\xi,\:1\le j\le r,\label{constraint:univariate}
\end{equation}
where $G_{j}$ and $H_{j}$ represent the $j^{\text{th}}$ row of
$G$ and $H$ respectively. With given $\xi$, the constraints (\ref{constraint:univariate})
offer satisfaction of each individual constraint of the polyhedron
probabilistically, but with more constraint violations \cite{arnold2013mixed} than (\ref{constraint:multivariate})  that will also be observed in section \ref{sec:Examples}. However, we consider the individual probabilistic constraints (\ref{constraint:univariate}), because they are simpler to handle and in general convex \cite{prekopa1995stochastic}, \cite{arnold2013mixed}. Similar to the above treatment, a sufficient
condition to satisfy (\ref{constraint:univariate}) is given by 
\begin{equation}
\text{Pr}\{G_{j}x_{k+1}\le H_{j}|\mathcal{F}_k\}\ge\xi,\quad1\le j\le r.\label{constraint:univariate_final}
\end{equation}
By (\ref{sys:stabilized_input}), one obtains,
\begin{align}
\text{Pr}\{G_{j}(\tilde{A}_{\theta_{k}}x_{k}+B_{\theta_{k}}\nu_{k}+w_{k})\le H_{j}\} & \ge\xi\label{constraint:remark_univar}\\
\Longleftrightarrow \text{Pr}\{G_{j}w_{k}\le H_{j}-G_{j}(\tilde{A}_{\theta_{k}}x_{k}+B_{\theta_{k}}\nu_{k})\} & \ge\xi\nonumber \\
\Longleftrightarrow F_{G_{j}w_{k}}(H_{j}-G_{j}(\tilde{A}_{\theta_{k}}x_{k}+B_{\theta_{k}}\nu_{k})) & \ge\xi\nonumber \\
\Longleftrightarrow G_{j}(\tilde{A}_{\theta_{k}}x_{k}+B_{\theta_{k}}\nu_{k}) & \le H_{j}-F_{G_{j}w_{k}}^{-1}(\xi),\label{constraint:det_univariate}
\end{align}
where $F_{G_{j}w_{k}}(.)$ and $F_{G_{j}w_{k}}^{-1}(.)$ are the cumulative
distribution and inverse cumulative distribution of the random variable
$G_{j}w_{k}$ respectively. 
\begin{rem}
The probabilistic constraints (\ref{constraint:univariate}) result
in handling with uni-variate Gaussian random variables $G_{j}w_{k}$
when converting to deterministic constraints (\ref{constraint:det_univariate}),
which is straightforward. In this case, given $\xi$, $F_{G_{j}w_{k}}^{-1}(\xi)$
can easily be obtained.
\end{rem}
Observe that the conditions (\ref{constraint:univariate_final}) and
(\ref{constraint:det_univariate}) are equivalent, which imply (\ref{constraint:univariate}).
\begin{rem}
The probabilistic constraints (\ref{constraint:multivariate}) and
(\ref{constraint:univariate}) are two different ways of treating
the constraints $Gx_{k}\le H$ in a probabilistic fashion. We consider the probabilistic constraints (\ref{constraint:univariate_final})
because of the simplicity and low conservatism involved.
\end{rem}

\section{A One-step RHC with probabilistic state constraints\label{sec:One-step-RHC-with}}
In this section, we provide a one-step RHC problem  with the objective function (\ref{obj:func}) subject to the
probabilistic constraints (\ref{constraint:univariate_final}) and the state-feedback control input (\ref{eq:stabilization_input}).  

At each $k\in \mathbb{N}_{\ge 0}$, we consider the following one-step RHC problem
\begin{align}
\boldsymbol{\mathcal{P}_{1}:} & \min_{\nu_{k}}J_{k}=\mathbb{E}\Big[x_{k}^{T}Q_{\theta_{k}}x_{k}+u_{k}^{T}R_{\theta_{k}}u_{k}+x_{k+1}^{T}\Psi_{\theta_{k+1}}x_{k+1}|\mathcal{F}_{k}\Big]\nonumber \\
 & \text{s.t.}\; (\ref{sys:djls_noise}), (\ref{eq:p_tran}),\nonumber\\
 & \quad\;\: \text{Pr}\big\{G_{j}x_{k+1}\le H_{j}|\mathcal{F}_k\big\}\ge\xi,\, 1\le j\le r,\,\xi\in[0,1],\label{eq:prob_constraint}\\
 & \quad\;\: u_{k}=K_{\theta_{k}}x_{k}+\nu_{k},\\
 & \quad\;\:\nu_{k}\in\mathbb{U}.\label{eq:input_con}
\end{align}
 
\begin{rem}
In $\mathcal{P}_{1}$, we choose the prediction horizon to be one because of the probabilistic constraints (\ref{eq:prob_constraint}) and the system (\ref{sys:stabilized_input}). In section \ref{sec:prob_constraints}, we obtained a deterministic equivalence of the constraints (\ref{eq:prob_constraint}) in terms of  the state variable and the mode at time $k$ that are known. In general, the larger prediction horizon result in better performance and more computational burden depending on the system \cite{book_maciejowski2002predictive}. The choice of the prediction horizon depends on the performance requirements and computational capability. Suppose, if we consider a multi-step prediction, the probabilistic constraints in $\mathcal{P}_{1}$ look like $\text{Pr}\big\{G_{j}x_{k+N}\le H_{j}|\mathcal{F}_k\big\}\ge\xi, \text{for }N\ge 2$.  By proceeding  with the similar approach of section \ref{sec:prob_constraints}, we can obtain an equality similar to (\ref{constraint:remark_univar}) that contain additional unknown random variables $\theta_{k+m},\,1\le m\le N-1$, where it is not possible to obtain its deterministic equivalence.  Thus, we choose a one-step prediction horizon to obtain a deterministic equivalence of the probabilistic constraints  (\ref{eq:prob_constraint}) for tractability of $\mathcal{P}_1$. At this point, we give a brief review of some works in the literature of RJLSs with multi-step prediction. In case of perfect state availability, without process noise, a multi-step prediction horizon was considered for discrete-time MJLSs with hard symmetric input and state constraints \cite{MPC_MJLS_lu2013}, \cite{MPC_MJLS_wen2013}, where an upper bound of the objective function  was minimized and the overall problem with the constraints was converted to sufficient LMIs. The approach was based on \cite{eco_costa1999constrained}, where the state variable was restriced to an invariant ellipsoid in the constrained space and the techniques of LMIs and Lyapunov method were utilized. However, it brings a computational burden because of the additional LMIs, and more importantly  it reduces the feasibility region drastically depending on the invariant ellipsoid. To avoid the sufficiency, the authors in \cite{vargas2013second} directly considered ellipsoidal constraints (in terms of second moment constraints) for discrete-time MJLSs with multi-step prediction, where the constraints were replaced by a set of LMIs. 
\end{rem}

It can be possible that the $\mathcal{P}_{1}$ become infeasible when the constraints
(\ref{eq:prob_constraint}) are tight with a given admissible input.
To resolve this issue, the constraints can be relaxed by an additional
slack variable $\rho_{k}\ge 0$ as \cite{schwarm1999chance},

\begin{equation}
\text{Pr}\big\{G_{j}x_{k+1}\le H_{j}+\mathbf{1}\rho_{k}|\mathcal{F}_k\big\}\ge\xi, \label{constraine:slack_old}
\end{equation}
where $\mathbf{1}$ denotes a column vector of appropriate size, which
contain all ones. Thus from (\ref{constraint:det_univariate}),

\begin{equation}
G_{j}\big(\tilde{A}_{\theta_{k}}x_{k}+B_{\theta_{k}}\nu_{k}\big)\le H_{j}-F_{G_{j}w_{k}}^{-1}(\xi)+\mathbf{1}\rho_{k}.\label{constraint:slack_final}
\end{equation}
The addition of a variable $\rho_{k}$ can be compensated by adding
a variable $\alpha\rho_{k}$ to the objective function in  $\mathcal{P}_{1}$. In
particular, $\alpha>0$ should be chosen as a very high
value, which act as a penalty to discourage the use of slack variable
\cite{feasibilitychinneck2007}. 

Thus  $\mathcal{P}_{1}$  will be converted to 
\begin{align}
\boldsymbol{\mathcal{P}_{2}:} & \min_{\nu_{k},\rho_{k}}J_{k}=x_{k}^{T}Q_{\theta_{k}}x_{k}+u_{k}^{T}R_{\theta_{k}}u_{k}+\mathbb{E}\big[x_{k+1}^{T}\Psi_{\theta_{k+1}}x_{k+1}|\mathcal{F}_{k}\big]+\alpha\rho_{k}\nonumber \\
& \text{s.t.}\; (\ref{sys:djls_noise}), (\ref{eq:p_tran}),\nonumber\\
 & \quad\;\: G_{j}\big(\tilde{A}_{\theta_{k}}x_{k}+B_{\theta_{k}}\nu_{k}\big)\le H_{j}-F_{G_{j}w_{k}}^{-1}(\xi)+\mathbf{1}\rho_{k},\, 1\le j\le r,\,\xi\in[0,1],\label{constraint:slack_RHC2}\\
 & \quad\;\: u_{k}=K_{\theta_{k}}x_{k}+\nu_{k},\\
 & \quad\;\:\nu_{k}\in\mathbb{U}, \\
 & \quad\;\:\rho_{k}\ge0.\label{eq:epsilon}
\end{align}

Notice that due to (\ref{constraint:slack_RHC2}), (\ref{eq:epsilon}), $\mathcal{P}_{2}$ is always feasible.

The main task is to minimize the objective function in $\mathcal{P}_{2}$ by
a proper choice of $\nu_{k}.$ To accomplish this task, we present
$\mathcal{P}_{2}$ in a tractable fashion in the following.

\subsection{Formulating $\mathcal{P}_{2}$ as a tractable problem}

In order to write the $\mathcal{P}_{2}$ in terms of tractable terms, consider the
term $\mathbb{E}\left[x_{k+1}^{T}\Psi_{\theta_{k+1}}x_{k+1}|\mathcal{F}_{k}\right]$.
From (\ref{sys:stabilized_input}), $x_{k+1}$ is a random vector with

\[
\mathbb{E}\left[x_{k+1}|\mathcal{F}_{k}\right]=\tilde{A}_{\theta_{k}}x_{k}+B_{\theta_{k}}\nu_{k},
\]
and
\[
\begin{aligned}\mathbb{E}\left[\left\{x_{k+1}-\left(\tilde{A}_{\theta_{k}}x_{k}+B_{\theta_{k}}\nu_{k}\right)\right\}1\star|\mathcal{F}_{k}\right]=\mathbb{E}\left[w_{k}w_{k}^{T}\right]=\mathbb{I}_{n}.\end{aligned}
\]
And now, let us consider the following lemma which is a classic result,
\begin{lem}
\label{lem:lemma1} Let $x$ be a random vector with mean $m$ and
covariance matrix $R$, and let $K$ be a given matrix of suitable
dimensions. Then
\begin{equation}
\mathbb{E}\left[x^{T}Kx\right]=m^{T}Km+\mathrm{tr}(KR).\label{eq:lemm1}
\end{equation}
\end{lem}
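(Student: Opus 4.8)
The plan is to use the standard decomposition of $x$ into its mean plus a zero-mean fluctuation, and then exploit the cyclic invariance of the trace to handle the only non-trivial term. First I would write $x = m + y$, where $y := x - m$ satisfies $\mathbb{E}[y] = 0$ and $\mathbb{E}[yy^{T}] = R$ by the definition of the covariance matrix. Substituting into the quadratic form and expanding gives
\begin{equation*}
x^{T} K x = m^{T} K m + m^{T} K y + y^{T} K m + y^{T} K y.
\end{equation*}

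Next I would take expectations term by term. The first term $m^{T} K m$ is deterministic and passes through unchanged. The two cross terms vanish: since $m$ and $K$ are constant, linearity of expectation together with $\mathbb{E}[y] = 0$ gives $\mathbb{E}[m^{T} K y] = m^{T} K\, \mathbb{E}[y] = 0$ and, similarly, $\mathbb{E}[y^{T} K m] = 0$. This leaves only the purely quadratic fluctuation term $\mathbb{E}[y^{T} K y]$ to be evaluated.

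The key step is to rewrite this remaining term. Since $y^{T} K y$ is a scalar, it equals its own trace, and by the cyclic property of the trace $y^{T} K y = \mathrm{tr}(y^{T} K y) = \mathrm{tr}(K y y^{T})$; note that this manipulation does not require $K$ to be symmetric. Interchanging trace and expectation, both of which are linear operations, then yields
\begin{equation*}
\mathbb{E}[y^{T} K y] = \mathrm{tr}\!\left(K\, \mathbb{E}[y y^{T}]\right) = \mathrm{tr}(KR).
\end{equation*}

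Collecting the surviving contributions gives $\mathbb{E}[x^{T} K x] = m^{T} K m + \mathrm{tr}(KR)$, which is precisely (\ref{eq:lemm1}). I do not expect any genuine obstacle in this argument: the only subtlety worth flagging is the scalar-to-trace rewriting, which is exactly the device that permits the expectation to be pushed inside and matched to the covariance matrix $R$; every other step is just linearity of expectation applied to the expanded quadratic.
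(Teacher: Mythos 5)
Your proof is correct and complete: the decomposition $x=m+y$ with $\mathbb{E}[y]=0$, the vanishing of the cross terms by linearity, and the scalar-to-trace rewriting $y^{T}Ky=\mathrm{tr}(Kyy^{T})$ followed by exchanging trace and expectation is exactly the canonical argument for this identity. The paper itself presents this lemma as ``a classic result'' and supplies no proof at all, so there is no alternative route to compare against; your write-up is precisely what a reader would be expected to supply, and your remark that the trace manipulation does not require $K$ to be symmetric is a correct and worthwhile observation.
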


The result (\ref{eq:lemm1}) and the law of total probability gives
that
\begin{equation}
\begin{aligned}\mathbb{E}\Big[x_{k+1}^{T}\Psi_{\theta_{k+1}}x_{k+1}|\mathcal{F}_{k}\Big]=\left(\tilde{A}_{\theta_{k}}x_{k}+B_{\theta_{k}}\nu_{k}\right)^{T}\Pi_{\theta_{k}}\star+\text{tr}\left(\Pi_{\theta_{k}}\right),\end{aligned}
\label{eq:exp_noise}
\end{equation}
where $\Pi_{\theta_{k}}\triangleq\sum\nolimits _{j=1}^{N}\pi_{\theta_{k}j}(x_{k})\Psi_{j}.$
By (\ref{eq:exp_noise}), the objective function in $\mathcal{P}_{2}$ can be expressed
as
\begin{align*}
J_{k}=x_{k}^{T}Q_{\theta_{k}}x_{k}+u_{k}^{T}R_{\theta_{k}}u_{k}+\left(\tilde{A}_{\theta_{k}}x_{k}+B_{\theta_{k}}\nu_{k}\right)^{T}\Pi_{\theta_{k}}\star+\text{tr}\left(\Pi_{\theta_{k}}\right)+\alpha\rho_{k.}
\end{align*}
Thus, the problem $\mathcal{P}_{2}$ is equivalent to
\begin{align}
\boldsymbol{\mathcal{P}_{3}:} & \min_{\nu_{k},\rho_{k}} J_{k}=x_{k}^{T}Q_{\theta_{k}}x_{k}+u_{k}^{T}R_{\theta_{k}}u_{k}+\left(\tilde{A}_{\theta_{k}}x_{k}+B_{\theta_{k}}\nu_{k}\right)^{T}\Pi_{\theta_{k}}\star +\text{tr}\left(\Pi_{\theta_{k}}\right)+\alpha\rho_{k}\nonumber \\
 & \text{s.t.}\; (\ref{sys:djls_noise}), (\ref{eq:p_tran}),\nonumber\\
 &  \quad\;\: G_{j}(\tilde{A}_{\theta_{k}}x_{k}+B_{\theta_{k}}\nu_{k})\le H_{j}-F_{G_{j}w_{k}}^{-1}(\xi)+\mathbf{1}\rho_{k},\, 1\le j\le r,\,\xi\in[0,1],\label{constraint:feasibility}\\
 & \quad\;\: u_{k}=K_{\theta_{k}}x_{k}+\nu_{k},\\
 & \quad\;\:\nu_{k}\in\mathbb{U},\\
 & \quad\;\:\rho_{k}\ge 0 \label{eq:eps_1}.
\end{align}
We denote $J_{k}^{*}$ as an optimal solution of $\mathcal{P}_{3}$. To solve $\mathcal{P}_{3}$, which is a Quadratic Programming (QP) problem, we use
the default QP solver in MATLAB. When we solve $\mathcal{P}_{3}$, if $\rho_{k}$ is zero, for all $k\ge 0$, then
we have solved original $\mathcal{P}_{1}$, and the solution of $\mathcal{P}_{3}$ would be equivalent to the solution of $\mathcal{P}_{1}$. If $\rho_{k}$ is positive, then we
conclude that $\mathcal{P}_{1}$ was infeasible, and the solution of $\mathcal{P}_{3}$ would approximate the solution of $\mathcal{P}_{1}$. This approximation depends on the magnitude of $\rho_{k}$, which need to be lower for better approximation.

\section{Illustrative Example\label{sec:Examples}}
In this section, we illustrate the proposed approach on a dynamics of macroeconomic system. We consider a macroeconomic system based on Samuelson's multiplier accelerator model
that has been considered by  several authors \cite{blairphd}, \cite{blair1975feedback}, \cite{eco_chen2012h}, \cite{eco_costa1999constrained},
\cite{eco_zhang2010fault} etc., to illustrate the respective results on
discrete-time MJLSs. In these works, basically, the discrete-time MJLS modelling of the economic system
was considered to describe the relation between the government expenditure (control input) and the national
income (system state) with different economic situations (modes) and the switching among the modes was modelled as a homogeneous Markov chain; an interested reader can refer to the above references for the detailed mathematical model. In all these references, the authors considered that the transitions among the  economic situations follow a time homogeneous Markov chain. A more realistic scenario is to consider that the transitions among the economic situations depend on the state of the economic system.

An economic system subject to fluctuations with state-dependent economic situation 
can be given by the SDJLS (\ref{sys:djls_noise}) , where $x_k:=[x_k(1)\, x_k(2)]^T\in \mathbb{R}^2$  with $x_{0}=\left[\begin{array}{cc}
2 & 2\end{array}\right]^{T}$. Here $x_{k}(2)$ denotes the national
income, $x_{k}(1)$ denotes the one unit time delayed version of $x_{k}(2)$, $u_k \in \mathbb{R}$ refers to the government expenditure. Let $w_{k}\sim\mathcal{N}(0,\mathbb{I}_2)$, which represents the fluctuations in the national income. Let $\theta_{k}\in\{1,2,3\}$ represent modes of the economic
system under three situations: $i=1$ (normal), $i=2$ (boom), $i=3$
(slump). The parameters of the mode matrices are: 
\[
A_{1}=\left[\begin{array}{cc}
0 & 1\\
-2.5 & 3.2
\end{array}\right],A_{2}=\left[\begin{array}{cc}
0 & 1\\
-4.3 & 4.5
\end{array}\right],A_{3}=\left[\begin{array}{cc}
0 & 1\\
5.3 & -5.2
\end{array}\right],
\]
$B_{\theta_{k}}=\left[\begin{array}{cc}
0 & 1\end{array}\right]^{T}$. The state-dependent transitions of the mode $\theta_{k}$ are given by (\ref{eq:p_tran}),
where

\[
\lambda=\left[\begin{array}{ccc}
0.6 & 0.3 & 0.1\\
0.25 & 0.55 & 0.2\\
0.35 & 0.15 & 0.5
\end{array}\right],\mu=\left[\begin{array}{ccc}
0.67 & 0.17 & 0.16\\
0.30 & 0.47 & 0.23\\
0.26 & 0.10 & 0.64
\end{array}\right] ,
\]
\[
G=\left[\begin{array}{cc}
0 & -1\\
0 & 1
\end{array}\right],\, H\equiv H_{k}=\left[\begin{array}{c}
-2-0.5k\\
5+0.5k
\end{array}\right].
\]
Observe that when the economy (the state $x_k$) satisfies $Gx_k\le H$, the transitions among the modes follow $\lambda$, where the transition probabilities to boom are higher, and also the transition probabilities to slump are lower compared to $\mu$. Thus, we observe the transitions among the economic situations depend on the state of the economy.
Consider the state-feedback control input (\ref{eq:stabilization_input}) with $\Vert\nu_{k}\Vert\le 100$. For pre-stabilization, we consider proposition  \ref{theorem:stabilization_SS} that can be satisfied with

\[
X_{1}=\left[\begin{array}{cc}
 1.3146 & 0\\
0 & 0.7534
\end{array}\right],X_{2}=\left[\begin{array}{cc}
1.9255 & 0\\
0 &  0.7628
\end{array}\right],X_{3}=\left[\begin{array}{cc}
1.1393 & 0\\
0 &  0.1044
\end{array}\right],
\]

\[
Y_{1}=\left[\begin{array}{cc}
  3.2866  &  -2.4108  \\
\end{array}\right],Y_{2}=\left[\begin{array}{cc}
8.2797 & -3.4325\\
\end{array}\right],Y_{3}=\left[\begin{array}{cc}
 -6.0384 & 0.5429 \\
\end{array}\right].
\]
Thus the system is pre-stabilized in the sense of (\ref{eq:msqs}), and the state-feedback gains are given by
\[
K_{1}=\left[\begin{array}{cc}
2.5 & -3.2  \\
\end{array}\right],K_{2}=\left[\begin{array}{cc}
4.3 & -4.5\\
\end{array}\right],K_{3}=\left[\begin{array}{cc}
-5.3 & 5.2 \\
\end{array}\right].
\]

We assume the probabilistic state constraints (\ref{constraint:univariate_final})
with $\xi=0.85$ as a pre-specified monitory target policy at each
$k$. It means that the income (the state) is required to meet the target (the range between the red lines given in figure \ref{fig:ex_stateMC} and figure \ref{fig:eco_income_compare} with a probability $\xi=0.85$, which denotes the level of constraint satisfaction. 

We consider the following parameters for the one-step RHC problem $\mathcal{P}_{3}$: $\alpha=1000,$ $Q_{1}=\mathbb{I}_{2},$ $Q_{2}=1.1\mathbb{I}_{2},$
$Q_{3}=1.2\mathbb{I}_{2},$ $R_{1}=1,$ $R_{2}=1.2,$ $R_{3}=1.3,$
$\Psi_{1}=\mathbb{I}_{2},$ $\Psi_{2}=2\mathbb{I}_{2},$ and $\Psi_{3}=3\mathbb{I}_{2}.$
Using the parameters, we solve $\mathcal{P}_{3}$, at each time $k$. With these parameters we obtain $\rho_k$ as zero for all $k$, which imply that the original problem  $\mathcal{P}_{1}$ was feasible and the solution of $\mathcal{P}_{3}$ is equivalent to the solution of $\mathcal{P}_{1}$. We consider the planning for 5 years of duration, where each unit time represents the period of three months. A sample mode ($\theta_{k}$)  evolution is given in figure \ref{fig:ex_mode}, which shows a sample evolution of an economic situation (normal, bloom or slump).   A corresponding optimal cost $J_{k}^{*}$ is shown in figure \ref{fig:ex_input}. Suppose, if the values of $\rho_k$ are not zero and larger ( $\rho_k >>0$) to make the constraints (\ref{constraint:slack_final}) feasible, then this would make the values of $J_{k}^{*}>>\alpha \rho_k=1000  \rho_k$, because all the remaining terms of the objective function in $\mathcal{P}_3$ are positive.
%\begin{figure}[h]
%\centering\includegraphics[width=7.5cm,height=6cm]{images/eco_sample_theta}
%\caption{A sample evolution of $\theta_{k}$}
%\label{fig:ex_mode}
%\end{figure}

\begin{figure}[h]
\centering
\begin{minipage}{.5\textwidth}
 \centering
   \includegraphics[width=.8\linewidth]{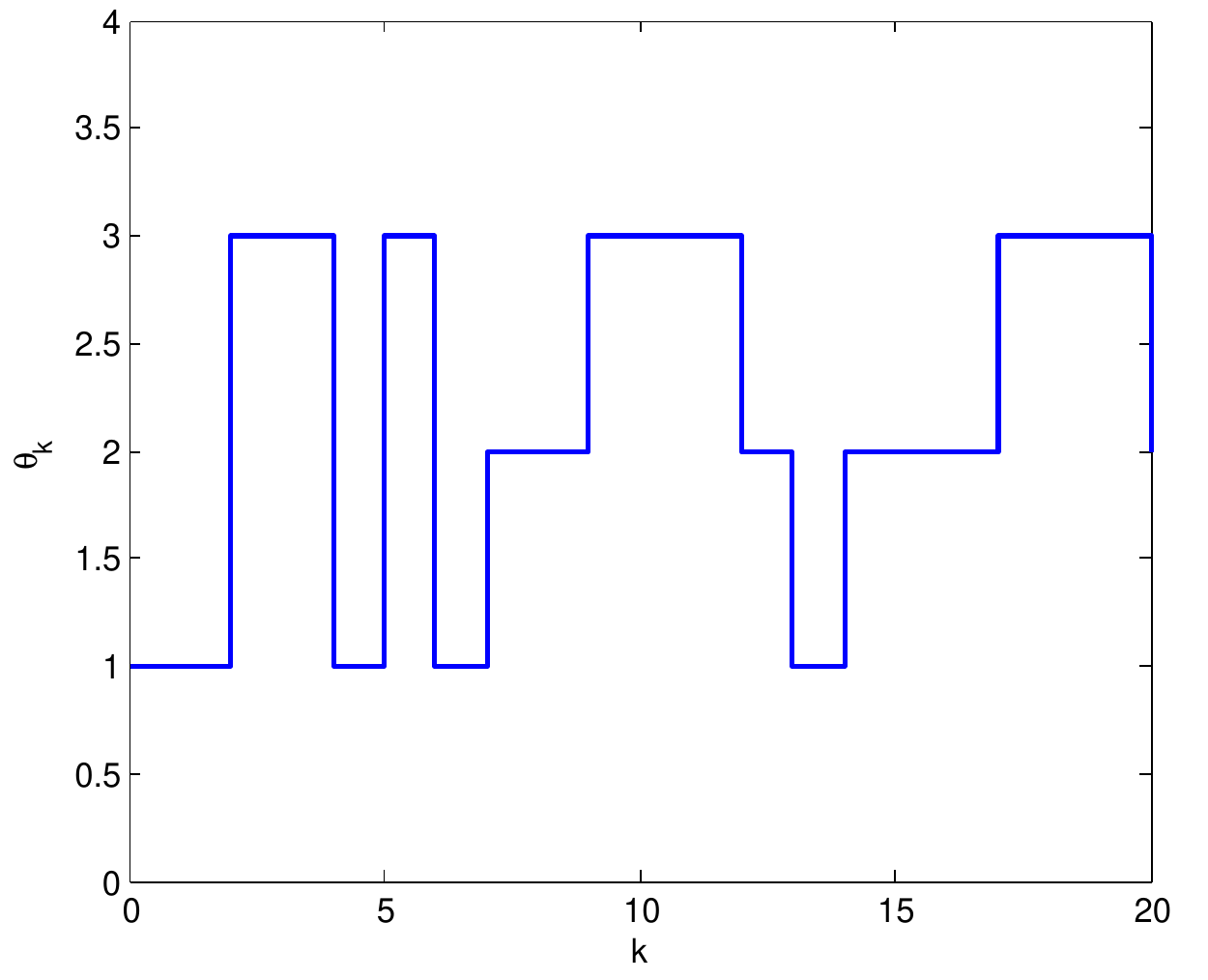}
   \caption{A sample evolution of $\theta_{k}$}
 \label{fig:ex_mode}
\end{minipage}%
\begin{minipage}{.5\textwidth}
    \centering
    \includegraphics[width=.8\linewidth]{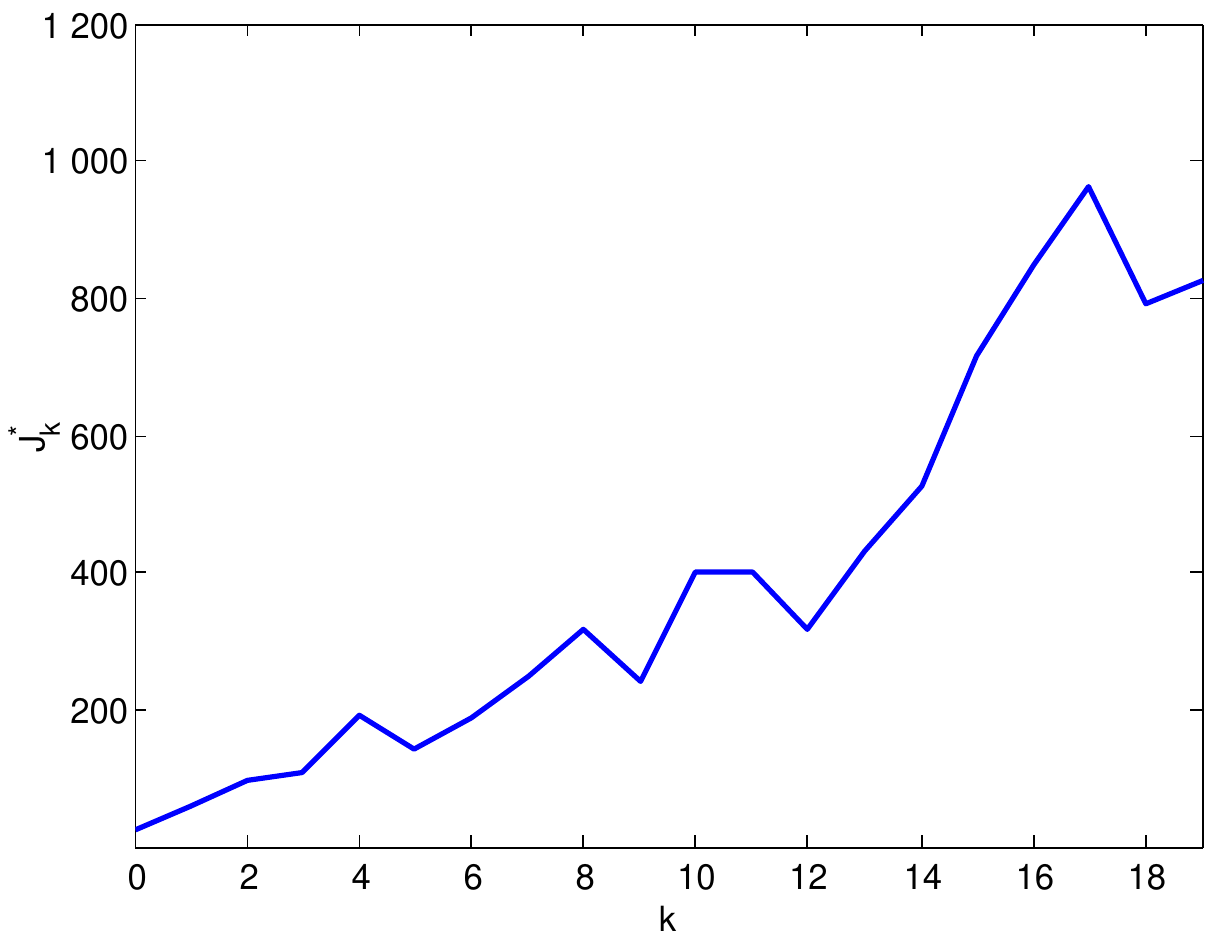}
    \caption{An optimal cost $J_{k}^{*}$}
    \label{fig:ex_input}
\end{minipage}
\end{figure}

To observe the probabilistic state constraints (\ref{constraint:univariate_final}) qualitatively, we performed Monte Carlo simulations for 50 runs and the incomes $(x_k(2))$ are shown in figure  \ref{fig:ex_stateMC}. One can observe occasional constraint violations, because we consider the constraint satisfaction probabilistically. 

To compare the probabilistic constraints  (\ref{constraint:univariate_final}), (\ref{constraint:multivariate_final}), we also solved $\mathcal{P}_3$ with the joint probabilistic constraints (\ref{constraint:multivariate_final}) and the incomes $(x_k(2))$ are shown in figure  \ref{fig:ex_state_joint} that are obtained via Monte Carlo simulations. From figure \ref{fig:ex_stateMC} and \ref{fig:ex_state_joint}, one can observe more constraint violations with individual probabilistic constraints (\ref{constraint:univariate_final}) than (\ref{constraint:multivariate_final}), which is stated in section \ref{sec:prob_constraints}. In figure \ref{fig:ex_state_joint}, we obtain $\rho_k >0$, for some $k$, which is due to smaller feasibility region of (\ref{constraint:chi_square}) compared to (\ref{constraint:det_univariate}) that can be observed by comparing the right hand sides of (\ref{constraint:chi_square}) and (\ref{constraint:det_univariate}).
%\begin{figure}[h]
%\centering\includegraphics[width=7.5cm,height=6cm]{images/eco_state_MC}
%\caption{A qualitative representation of the probabilistic constraints}
%\label{fig:ex_stateMC}
%\end{figure}

\begin{figure}[h]
\centering
\begin{subfigure}{.5\textwidth}
  \centering
  \includegraphics[width=.8\linewidth]{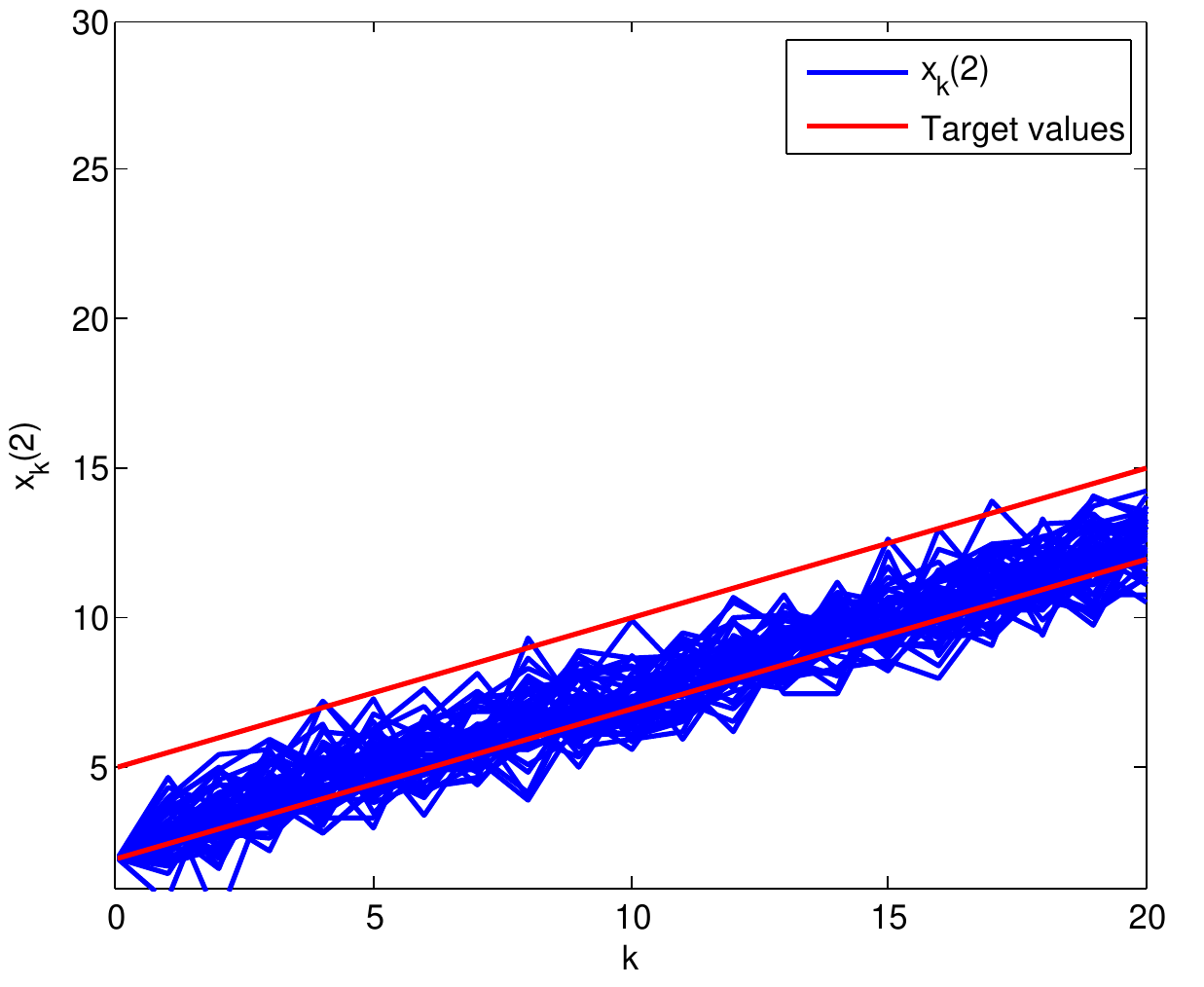}
  \caption{Individual probabilistic constraints (\ref{constraint:univariate_final})}
  \label{fig:ex_stateMC}
\end{subfigure}%
\begin{subfigure}{.5\textwidth}
  \centering
  \includegraphics[width=.8\linewidth]{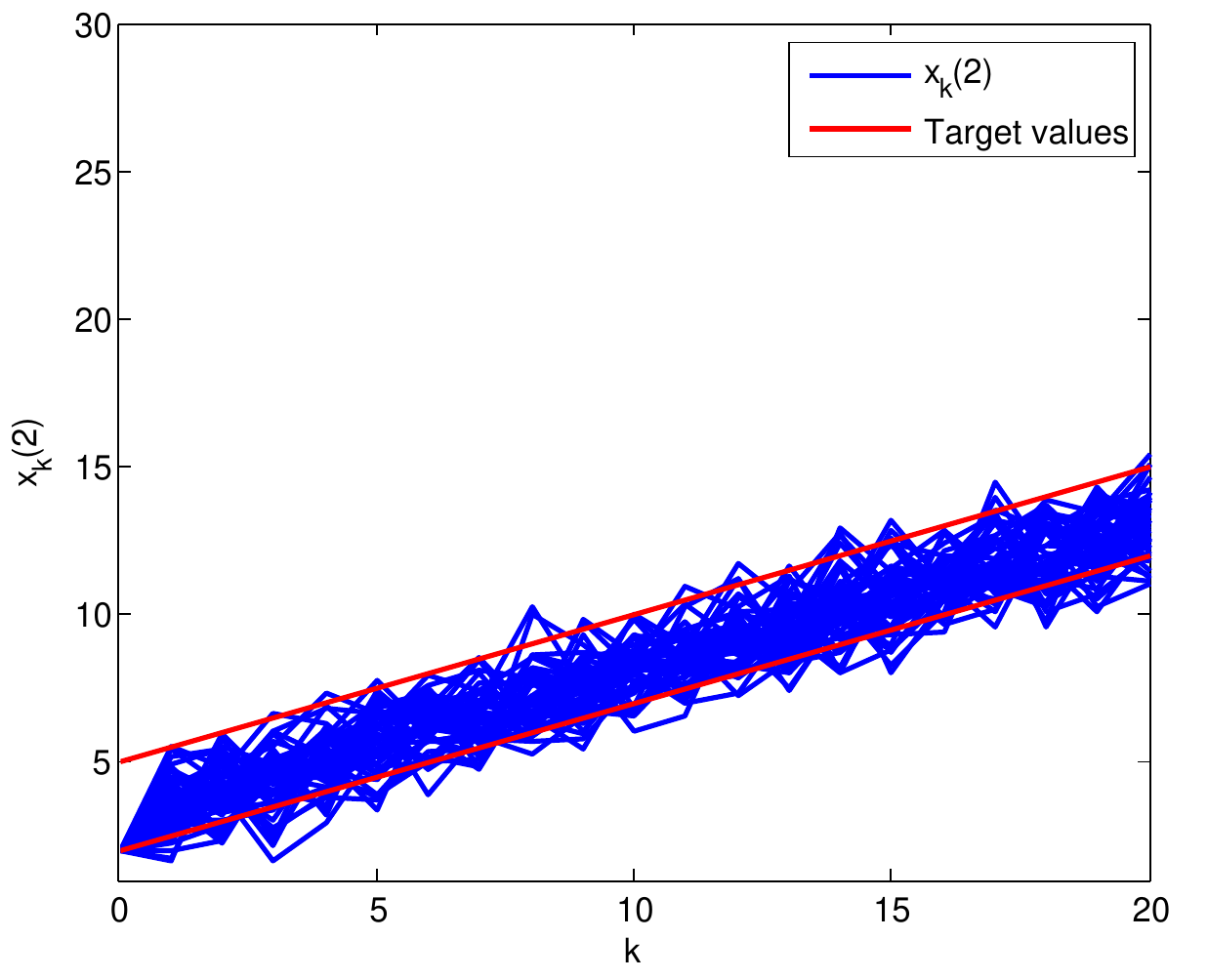}
  \caption{Joint probabilistic constraints (\ref{constraint:multivariate_final})}
  \label{fig:ex_state_joint}
\end{subfigure}
\caption{A qualitative representation of the probabilistic constraints}
\label{fig:ex_st}
\end{figure}

%\begin{figure}[h]
%\centering\includegraphics[width=7.5cm,height=6cm]{images/eco_input_cost}
%\caption{An optimal cost $J_{k}^{*}$}
%\label{fig:ex_input}
%\end{figure}

To compare the probabilistic state constraints (\ref{constraint:univariate_final}) with different values of $\xi$ qualitatively, we perform Monte Carlo simulations of 50 runs for values of $\xi$ (0.95, 0.5, 0.3) separately by keeping the remaining parameters same as above, and the incomes are
shown in figure \ref{fig:eco_income_compare}. Observe that the larger the level of constraint satisfaction $\xi$, the better the incomes meet the required target. For this experiment, we obtained $\rho_{k}$ as zero, for all $k$, when $\xi$ is 0.5 and 0.3; $\rho_{k}>0,$ for some $k$, when $\xi$ is 0.95. It implies that the higher the level of constraint satisfaction $\xi$, the more the chance of becoming the original $\mathcal{P}_1$ infeasible. This can also be observed from (\ref{constraint:det_univariate}), where the larger values of $\xi$ makes the right hand side of (\ref{constraint:det_univariate}) smaller,  thus reducing its feasibility region.

\begin{figure}[h]
\centering\includegraphics[width=9cm,height=6cm]{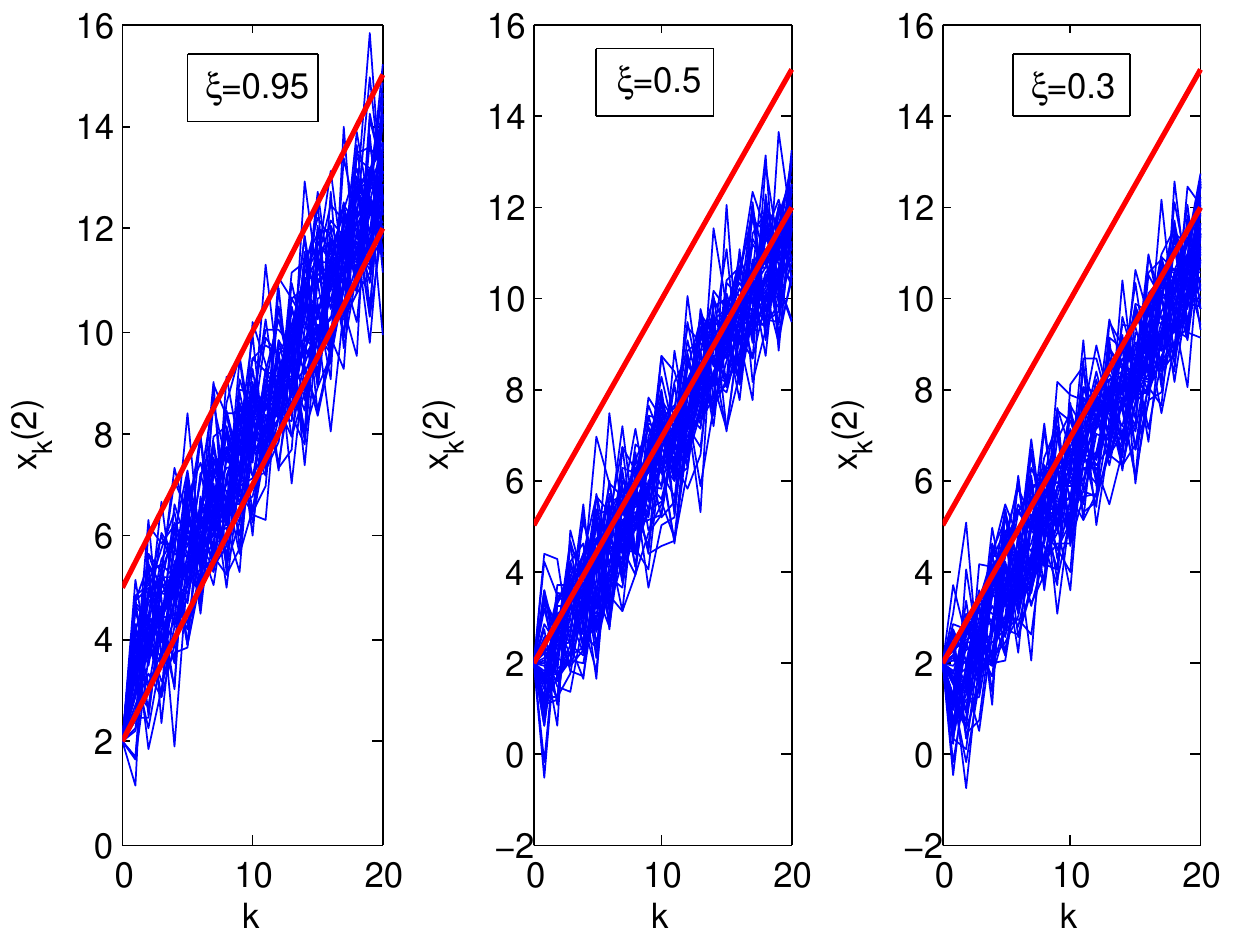}\caption{Illustration of the probabilistic constraints with different values of $\xi$}
\label{fig:eco_income_compare}
\end{figure}

\section{Conclusions}\label{sec:conclusions}
We considered a receding horizon control of discrete-time state-dependent jump linear systems subject to additive stochastic unbounded disturbance with probabilistic state constraints. We used an
affine state-feedback control, and synthesized feedback gains that guarantee
the mean square boundedness of the system  by solving a linear matrix inequality problem off-line.
We obtained sufficient deterministic conditions to satisfy probabilistic
constraints by utilizing inverse cumulative distribution function, and consequently converted
the overall receding horizon problem as a tractable deterministic optimization problem.
Although, it is difficult to guarantee the recursive feasibility in
the case of stochastic unbounded disturbance, we attempted to resolve
this issue with an addition of slack variable to the obtained constraints
with a penalty in the objective function. We performed simulations a macroeconomic system to verify the proposed methodology.

\appendix\label{sec:appendix}
\section{Proof of Proposition 1:}
Consider the given system
\begin{equation}
x_{k+1}=\tilde{A}_{\theta_{k}}x_{k}+B_{\theta_{k}}\nu_{k}+w_{k},\label{sys:proof1}
\end{equation}
where $\theta_{k}$ is defined in (\ref{eq:p_tran}) and $\tilde{A}_{\theta_{k}}=A_{\theta_{k}}+B_{\theta_{k}}K_{\theta_{k}}.$
From remark \ref{remark:input_measurable}, $\{\nu_{k}\}_{k\in \mathbb{N}_{\ge 0}}$
is a stochastic process with the property that for each $k\in \mathbb{N}_{\ge 0}$,
the vector $\nu_{k}$ is ${\mathcal{F}}_{k}$-measurable and $E[\Vert\nu_{k}\Vert^{2}]\leq\delta<\infty$.\\

\noindent One knows that there exist matrices $K_i \in \mathcal{T}^{N}$ and $P_i\in\mathcal{S}_{n}^{N}$, with $P_{i}\succ0$, for $1\leq i\leq N$ such that (\ref{LMI1}) and
(\ref{LMI2}) are verified. Consider $V(x_{k},\theta_{k})=x_{k}^{T}P_{\theta_{k}}x_{k}$. Let $\Upsilon_{(\theta_{k},x_{k})}\triangleq{\textstyle \sum\nolimits _{j=1}^{N}}\pi_{\theta_{k}j}(x_{k})P_{j}$, $\mathcal{L}_{\theta_{k}}\triangleq{\textstyle \sum\nolimits _{j=1}^{N}}\lambda_{\theta_{k}j}P_{j}$,
and $\mathcal{M}_{\theta_{k}}\triangleq{\textstyle \sum\nolimits _{j=1}^{N}}\mu_{\theta_{k}j}P_{j}$. One has

\begin{align*}
\mathbb{E}\big[V&(x_{k+1},\theta_{k+1})|{\mathcal{F}_{k}}\big]-V(x_{k},\theta_{k}) \\ &=x_{k}^{T}\left(\tilde{A}_{\theta_{k}}^{T}\Upsilon_{(\theta_{k},x_{k})}\tilde{A}_{\theta_{k}}-P_{\theta_{k}}\right)x_{k}+2x_{k}^{T}\tilde{A}_{\theta_{k}}^{T}\Upsilon_{(\theta_{k},x_{k})}\mathbb{E}\left[w_{k}|{\mathcal{F}_{k}}\right] +2\nu_{k}^{T}B_{\theta_{k}}^{T}\Upsilon_{(\theta_{k},x_{k})}\tilde{A}_{\theta_{k}}x_{k}\\
 &\quad +2\nu_{k}^{T}B_{\theta_{k}}^{T}\Upsilon_{(\theta_{k},x_{k})}\mathbb{E}\left[w_{k}|{\mathcal{F}_{k}}\right]+\nu_{k}^{T}B_{\theta_{k}}^{T}\Upsilon_{(\theta_{k},x_{k})}B_{\theta_{k}}\nu_{k}+\mathbb{E}\left[w_{k}^{T}\Upsilon_{(\theta_{k},x_{k})}w_{k}|{\mathcal{F}_{k}}\right],\\
  & \leq-\mu\Vert x_{k}\Vert^{2}+2x_{k}^{T}\tilde{A}_{\theta_{k}}^{T}\Upsilon_{(\theta_{k},x_{k})}\mathbb{E}\left[w_{k}|{\mathcal{F}_{k}}\right]
  +2\nu_{k}^{T}B_{\theta_{k}}^{T}\Upsilon_{(\theta_{k},x_{k})}\tilde{A}_{\theta_{k}}x_{k}+2\nu_{k}^{T}B_{\theta_{k}}^{T}\Upsilon_{(\theta_{k},x_{k})}\mathbb{E}\left[w_{k}|{\mathcal{F}_{k}}\right]\\
  &\quad +\alpha_{1}\Vert\nu_{k}\Vert^{2}+\alpha_{2}\mathbb{E}\left[w_{k}^{T}w_{k}|{\mathcal{F}_{k}}\right],
\end{align*}
where $\mu=\underset{1\leq j\leq2}{\min}\left(\underset{1\leq i\leq N}{\min}\left(\lambda_{\min}\left(-\mathcal{E}_{i}^{j}\right)\right)\right),$

\noindent $\alpha_{1}=\underset{}{\max}\left(\underset{1\leq i\leq N}{\max}\left(\lambda_{\max}\left(B_{i}^{T}\mathcal{L}_{i}B_{i}\right)\right),\,\underset{1\leq i\leq N}{\max}\left(\lambda_{\max}\left(B_{i}^{T}\mathcal{M}_{i}B_{i}\right)\right)\right),$ 

\noindent and $\alpha_{2}=\underset{}{\max}\left(\underset{1\leq i\leq N}{\max}\left(\lambda_{\max}\left(\mathcal{L}_{i}\right)\right),\,\underset{1\leq i\leq N}{\max}\left(\lambda_{\max}\left(\mathcal{M}_{i}\right)\right)\right)$.
Because the random vector $w_{k}$ is independent of ${\mathcal{F}}_{k}$
we obtain 
\begin{align*}
\mathbb{E}\big[V&(x_{k+1},\theta_{k+1})|{\mathcal{F}_{k}}\big]-V(x_{k},\theta_{k})\\
 & \leq-\mu\Vert x_{k}\Vert^{2}+2x_{k}^{T}\tilde{A}_{\theta_{k}}^{T}\Upsilon_{(\theta_{k},x_{k})}\mathbb{E}\left[w_{k}\right]+2\nu_{k}^{T}B_{\theta_{k}}^{T}\Upsilon_{(\theta_{k},x_{k})}\tilde{A}_{\theta_{k}}x_{k}+2\nu_{k}^{T}B_{\theta_{k}}^{T}\Upsilon_{(\theta_{k},x_{k})}\mathbb{E}\left[w_{k}\right]\\
 & \quad +\alpha_{1}\Vert\nu_{k}\Vert^{2}+\alpha_{2}\mathbb{E}\left[w_{k}^{T}w_{k}\right],\\
 & \leq-\mu\Vert x_k\Vert^{2}+2\Vert\nu_{k}\Vert\Vert B_{\theta_{k}}\Vert\Vert\Upsilon_{(\theta_{k},x_{k})}\Vert\Vert\tilde{A}_{\theta_{k}}\Vert\Vert x_{k}\Vert+\alpha_{1}\Vert\nu_{k}\Vert^{2}+n\alpha_{2}.
\end{align*}
From the Young's inequality, $2ab\le\kappa a^{2}+\kappa^{-1}b^{2},$
$\forall (a,b)\in\mathbb{R}^{2},$  $\forall \kappa>0$; note that
$\forall\kappa>0$ one has
\begin{align*}
2\Vert\nu_{k}\Vert\Vert B_{\theta_{k}}\Vert\Vert\Upsilon_{(\theta_{k},x_{k})}\Vert\Vert\tilde{A}_{\theta_{k}}\Vert\Vert x_{k}\Vert\leq\kappa\Vert x_{k}\Vert^{2}+\kappa^{-1}\beta^{2}\Vert\nu_{k}\Vert^{2},
\end{align*}
where $\beta=\left(\underset{1\leq i\leq N}{\max}\Vert B_{i}\Vert\right)\cdot\alpha_{2}\cdot\left(\underset{1\leq i\leq N}{\max}\Vert\tilde{A}_{i}\Vert\right)$.
This yields
\begin{align*}
\mathbb{E}\left[V(x_{k+1},\theta_{k+1})|{\mathcal{F}_{k}}\right]-V(x_{k},\theta_{k}) & \leq-(\mu-\kappa)\Vert x_{k}\Vert^{2}+(\alpha_{1}+\kappa^{-1}\beta^{2})\Vert\nu_{k}\Vert^{2}+n\alpha_{2}.
\end{align*}
Take $\kappa\in(0,\mu)$. We have
\begin{align}
\mathbb{E}\left[V(x_{k+1},\theta_{k+1})|{\mathcal{F}_{k}}\right] & \leq(1-c_{2}^{-1}(\mu-\kappa))V(x_{k},\theta_{k})+(\alpha_{1}+\kappa^{-1}\beta^{2})\Vert\nu_{k}\Vert^{2}+n\alpha_{2},\label{equa7}
\end{align}
where $c_{2}=\underset{1\leq i\leq N}{\max}\left(\lambda_{\max}\left(P_{i}\right)\right)$,
with the constraint $(1-c_{2}^{-1}(\mu-\kappa))<1$. Let $q=1-c_{2}^{-1}(\mu-\kappa)$.
Taking expectation on both sides of (\ref{equa7}), we get
\begin{align*}
\mathbb{E}\left[V(x_{k+1},\theta_{k+1})\right]\leq & q\mathbb{E}\left[V(x_{k},\theta_{k})\right]+(\alpha_{1}+\kappa^{-1}\beta^{2})\mathbb{E}\left[\Vert\nu_{k}\Vert^{2}\right]+n\alpha_{2}.
\end{align*}
We obtain recursively
\begin{align*}
\mathbb{E}\left[V(x_{k},\theta_{k})\right] & \leq q^{k}\mathbb{E}\left[V(x_{0},\theta_{0})\right]+(\alpha_{1}+\kappa^{-1}\beta^{2})\sum\nolimits _{p=1}^{k-1}q^{k-p-1}\mathbb{E}\left[\Vert\nu_{p}\Vert^{2}\right] +n\alpha_{2}\sum\nolimits _{p=1}^{k-1}q^{k-p-1},\\
 & \leq q^{k}\mathbb{E}\left[V(x_{0},\theta_{0})\right]+\frac{n\alpha_{2}+\delta(\alpha_{1}+\kappa^{-1}\beta^{2})}{1-q}.
\end{align*}
Finally, we obtain
\begin{align*}
\mathbb{E}\left[\|x_{k}\|^{2}\right] & \leq\frac{c_{2}}{c_{1}}q^{k}\|x_{0}\|^{2}+\frac{n\alpha_{2}+\delta(\alpha_{1}+\kappa^{-1}\beta^{2})}{c_{1}(1-q)},
\end{align*}
where $c_{1}=\underset{1\leq i\leq N}{\min}\left(\lambda_{\min}\left(P_{i}\right)\right)$.
Hence the proof is complete.\hfill{}$\square$

\section{Proof of Lemma \ref{theorem:multivariate}:}

Consider the given probabilistic constraint (\ref{constraint:multivariate}),
\begin{equation}
\text{Pr}\{Gx_{k+1}\le H|\mathcal{F}_k\}\ge\xi.\label{constraint:prob_intheproof}
\end{equation}
Let $H_{k}\triangleq H-G(\tilde{A}_{\theta_{k}}x_{k}+B_{\theta_{k}}\nu_{k}),$
then (\ref{constraint:prob_intheproof}) is equivalent to 

\begin{equation}
\text{Pr}\{Gw_{k}\le H_{k}\}\ge\xi.\label{proof_mv: pconstraint}
\end{equation}
Let $\mathcal{Z}=\{\gamma:\, G\gamma\le H_{k}\},$ then we can rewrite
the above condition as $\text{Pr}\{w_{k}\in\mathcal{Z}\}\ge\xi$, which
is equivalent to

\begin{equation}
\frac{1}{\sqrt{2\pi^{n}}}\int_{\mathcal{Z}}\text{e}^{-\frac{1}{2}\Vert v\Vert^{2}}dv\ge\xi.\label{proof_mv:int}
\end{equation}
Since $\mathcal{Z}$ is a polyhedron, it is difficult to obtain the
closed form of the above integral. So, we consider an inscribed ellipsoidal
approximation of $\mathcal{Z}$ \cite{van2001lmi}, \cite{van2002conic}.
It is reasonable since the level curves of multivariate Gaussian distribution
are ellipsoids. Consider an ellipsoid (since the covariance of $w_{k}$
is identity matrix),

\[
\mathcal{E}_{\delta}\triangleq\{\gamma:\,\Vert\gamma\Vert^{2}\le\delta^{2}\}.
\]
If $\mathcal{E}_{\delta}$ is inscribed in the polyhedron $\mathcal{Z},$
i.e;

\begin{equation}
\mathcal{E}_{\delta}\subset\mathcal{Z},\label{proof_mv:inscribed}
\end{equation}
then $\text{Pr}\{w_{k}\in\mathcal{Z}\}\ge\xi$ is implied by 

\[
\frac{1}{\sqrt{2\pi^{n}}}\int_{\mathcal{E}_{\delta}}\text{e}^{-\frac{1}{2}\Vert v\Vert^{2}}dv\ge\xi.
\]
The above multiple integral can be converted to single integral by
following standard techniques of finding multivariate probability
density function as

\[
\frac{1}{\sqrt{2^{n}}\Gamma(n/2)}\int_{0}^{\delta^{2}}\chi^{\frac{n-2}{2}}\text{e}^{-\frac{1}{2}\chi}d\chi\ge\xi,
\]
that can be rewritten as

\begin{equation}
F_{n,Chi}(\delta^{2})\ge\xi,\label{proof_mv:chi}
\end{equation}
where $F_{n,Chi}(.)$ denotes the Chi-square cumulative distribution
function with $n$ degrees of freedom. The above inequality can be
satisfied with $\delta$ value of $\sqrt{F_{n,Chi}^{-1}(\xi)}.$ By
utilizing the maximization of a liner functional over an ellipsoidal
set, $\mathcal{E}_{\delta}\subset\mathcal{Z}$ can be ensured by \cite{ben2001lectures},\cite{van2001lmi},

\begin{align}
\delta\sqrt{G_{j}G_{j}^{T}}\le H_{kj},\,1\le j\le r,\label{proof_mv:suff}\\
\delta\Vert G_{j}\Vert_{2}\le H_{kj},\,1\le j\le r,
\end{align}
for (\ref{proof_mv:inscribed}), where $G_{j}$ and $H_{kj}$ are $j^{\text{th}}$ rows of $G$ and $H_k$ respectively. Thus (\ref{proof_mv:suff}) is equivalent
to

\[
G_{j}(\tilde{A}_{\theta_{k}}x_{k}+B_{\theta_{k}}\nu_{k})\le H_{j}-\delta\Vert G_{j}\Vert_{2},\,1\le j\le r,
\]
thus completes the proof.\hfill{}$\square$

%% References
%%
%% Following citation commands can be used in the body text:
%% Usage of \cite is as follows:
%%   \cite{key}         ==>>  [#]
%%   \cite[chap. 2]{key} ==>> [#, chap. 2]
%%

%% References with bibTeX database:
\bibliographystyle{plain}

\end{document}